\numberwithin{equation}{section}
\newcommand{\be}{\begin{equation}}
\newcommand{\ee}{\end{equation}}
\newcommand{\bea}{\begin{eqnarray}}
\newcommand{\eea}{\end{eqnarray}}
\newcommand{\bse}{\begin{subequations}}
\newcommand{\ese}{\end{subequations}}
\newtheorem{theorem}{Theorem}[section]
\newtheorem{proposition}[theorem]{Proposition}
\newtheorem{lemma}[theorem]{Lemma}
\newtheorem{remark}[theorem]{Remark}
\newcommand{\C}{{\mathbb C}}
 \newcommand{\Rmnum}[1]{\expandafter\@slowromancap\romannumeral #1@}
\newcommand{\ba}{\begin{array}}
\newcommand{\ea}{\end{array}}
\def\bee{\begin{eqnarray}}
\def\ene{\end{eqnarray}}
\def\bes{\begin{subequations}}
\def\ees{\end{subequations}}
\newcommand{\ol}{\overline}
\newcommand{\id}{\mathbb{I}}
\newcommand{\re}{\mathrm{Re}}
\newcommand{\eps}{\varepsilon}
\newcommand{\Sig}{\Sigma}
\newcommand{\Lam}{\Lambda}
\newcommand{\gam}{\gamma}
\newcommand{\Om}{\Omega}
\newcommand{\dta}{\delta}
\newcommand{\Dta}{\Delta}
\newcommand{\tha}{\theta}
\numberwithin{equation}{section}
\title{Initial-boundary value problem for the two-component Gerdjikov-Ivanov equation on the interval}
\author{
Qiaozhen Zhu$^{a}$ \footnote{E-mail address: qiaozhenzhu13@fudan.edu.cn}, ~~Jian Xu $^{b}$ \footnote{ jianxu@usst.edu.cn}, ~~En-gui Fan$^{a}$ \footnote{correspondence author. E-mail address: faneg@fudan.edu.cn}\\
{\small \it $^{a}$School of Mathematical Sciences, Fudan University, Shanghai 200433, P. R. China} \\
{\small \it $^{b}$College of Science, University of Shanghai for Science and Technology, Shanghai 200093, P. R. China} \\
}
\date{
}
\begin{document}

\maketitle

\begin{abstract}
\baselineskip=18pt

 In this paper, we apply Fokas unified  method  to study initial-boundary value problems for the two-component Gerdjikov-Ivanov equation formulated on the finite interval with $3 \times 3$ Lax pairs. The solution can be expressed in terms of the solution of a $3\times3$ Riemann-Hilbert problem. The relevant jump matrices are explicitly given in terms of three matrix-value spectral functions $s(\lambda)$, $S(\lambda)$ and $S_L(\lambda)$, which arising from the initial values at $t=0$, boundary values at $x=0$ and boundary values at $x=L$, respectively. Moreover, The associated  Dirichlet to Neumann map is  analyzed via the global relation. The relevant formulae for boundary value problems on the finite interval can reduce to ones on the half-line as the length of the interval tends to infinity.

\vskip 6pt
\noindent{\textbf{Keywords:}}  Two-component Gerdjikov-Ivanov equation, initial-boundary value problem, Fokas unified method, Riemann-Hilbert problem.\\
\end{abstract}
\vskip 20pt

\baselineskip=18pt

\section{Introduction}

The  Gerdjikov-Ivanov (GI) equation takes in  the form   \cite{gi}
\begin{equation}\label{GI}
q_t=iq_{xx}+q^2\bar q_x+\frac{i}{2}q^3{\bar q}^2.
\end{equation}
In these  years, there has been much work on   the GI equation, including   Hamiltonian structures \cite{fan2}, Darboux transformation \cite{fan1}, rouge wave and breather soliton \cite{he1}, algebro-geometric solutions \cite{hfan}, envelope bright and dark soliton solution \cite{mawx}. Recently, Zhang,Cheng and He obtained the N-soliton solutions with Riemann-Hilbert method about the two-component (2-GI) equation \cite{hjs}
\be\label{cGI}
\left\{
\ba{l}
q_{1t}=iq_{1xx}+q_1(q_1\bar{q}_{1x}+q_2\bar{q}_{2x})+\frac{i}{2}q_1(|q_1|^4+|q_2|^4)+i|q_1q_2|^2q_1,\\
q_{2t}=iq_{2xx}+q_2(q_1\bar{q}_{1x}+q_2\bar{q}_{2x})+\frac{i}{2}q_2(|q_1|^4+|q_2|^4)+i|q_1q_2|^2q_2.
\ea
\right.
\ee

\par
 In 1997, Fokas announced the unified transform for the analysis of initial boundary value (IBV) problems for linear and nonlinear integrable PDEs \cite{f1}.  The Fokas method  was usually used to analyze the IBV problem for integrable PDEs with $2\times2$ Lax pair on the half-line and  the finite interval, such as  nonlinear Schr\"oding equation \cite{nls1, nls2},  sine-Gordon equation \cite{sg1, sg2},  KdV equation \cite{kdv},  mKdV equation \cite{mkdv1, mkdv2},  derivative nonlinear Schr\"oding equation \cite{dnls}. In 2012, Lenells extended this method to the IBV problem of  integrable systems  with $3\times3$ Lax pair on the half-line \cite{l1}.  After that, several important integrable equations with $3\times3$ Lax pair have been investigated, including
Degasperis-Procesi \cite{dp}, Sasa-satuma \cite{ss}. However,  there has been still  less work  on the IBV problems on the finite interval  of  integrable equations with $3 \times 3$ Lax pair except  to  the two-component NLS  \cite{xf2nlsf}, general coupled NLS \cite{tsf} and the integrable spin-1 Gross-Pitaevskii \cite{yzy} equations.

In this paper, we apply Fokas method to consider 2-GI equation with the following   initial boundary value data:
\begin{equation}\label{ibv-cgi}
\ba{llll}
&\mbox{Initial value:}  &q_1(x,t=0)=q_{10}(x),\quad &q_2(x,t=0)=q_{20}(x),\\
&\mbox{Dirichlet boundary value:}  &q_1(x=0,t)=g_{01}(t),\quad &q_2(x=0,t)=g_{02}(t),\\
   &&q_1(x=L,t)=f_{01}(t),\quad &q_2(x=L,t)=f_{02}(t),\\
&\mbox{Neumann boundary value:} &q_{1x}(x=0,t)=g_{11}(t),\quad  &q_{2x}(x=0,t)=g_{12}(t),\\
  &&q_{1x}(x=L,t)=f{11}(t),\quad  &q_{2x}(x=L,t)=f_{12}(t),
\ea
\end{equation}
where $q_1(x,t)$ and $q_2(x,t)$ are complex-valued functions of $(x,t) \in \Omega$,  and   $\Omega$ denotes  the finite interval domain
\[
\Omega=\{(x,t)| 0\leq x \leq L, 0\leq t \leq T\},
\]
here $L>0$ is a positive fixed constant and $T>0$ being a fixed final time.

\par
Comparing with two-component NLS  equation \cite{xf2nlsf},   the IBV problem of the 2-GI equation  (\ref{cGI})  also presents some distinctive features in the use  of Fokas method:
 (i) The order of spectral variable $k$ in the Lax pair (\ref{Lax}) is higher than that  of   2-NLS  equation. In order to make the results on the interval reduce to the ones on the half-line,   we should first introduce transformation  $\psi(x,t, k)=k^{\frac{1}{2}\Lambda}\phi(x,t,k) k^{-\frac{1}{2}\Lambda}$ so that the Lax pairs are even functions of $k$.
 (ii)   The  2-GI equation admits
a generalized Wadati-Konno-Ichikawa (WKI) type Lax pair,  
     which  admits  a gauge  transformation to  AKNS-type Lax pair,  but this gauge transformation can not be used to analyze the IBV problem by mapping it into  2-NLS equation. We need to introduce a matrix-value function $G(x,t)$ to transform the WKI-type Lax pair into AKNS-type Lax pair.

\par
 Organization of  this  paper is as follows.   In the following  section 2,  we perform the spectral analysis of the associated Lax pair
 for the 2-GI equation (\ref{cGI}).   In the section 3, we  give  the
corresponding  matrix RH problem  associated with  the IBV problem of  2-GI equation.
In section 4,  we get the map between the Dirichlet and the Neumann boundary problem through analysising the global relation.
Especially, the
relevant formulae for boundary value problems on the finite interval can reduce to ones on the half-line as the
length of the interval approaches to infinity.

\section{Spectral analysis}

\subsection{Lax pair}
The 2-GI equation admits a  $3 \times 3$ Lax pair \cite{hjs}
\begin{subequations}\label{Lax}
\be\label{Lax-x}
\psi_x+ik^2\Lam\psi=U_1\psi,
\ee
\be\label{Lax-t}
\psi_t+2ik^4\Lam\psi=U_2\psi,
\ee
\end{subequations}
where $\psi(x,t,k)$ is a $3\times 3-$matrix valued  eigenfunction, $k \in \mathbb{C}$ is the spectral parameter, and $U_1(x,t), U_2(x,t)$ are $3\times 3-$matrix valued functions given by
\begin{equation}\label{U1U2def}
U_1=-kQ\Lam+\frac{i}{2}Q^2\Lam, \qquad U_2=-2k^3Q\Lam+ik^2\Lam Q^2+ikQ_x-\frac{1}{2}[Q_x,Q]+\frac{i}{4}Q^4\Lam,
\end{equation}
\be\label{Lamdef}
\Lam=\left(\ba{ccc}1&0&0\\0&-1&0\\0&0&-1\ea\right), \qquad \qquad Q=\left(\ba{ccc}0&q_1&q_2\\\bar q_1&0&0\\\bar q_2&0&0\ea\right).
\ee

\par
There are both odd power and even power of $k$ in the Lax pair (\ref{Lax}),  to make (\ref{Lax}) are even functions of $k$ for analyzing  the large $L$ limit, we   introduce a transformation
\begin{equation}
  \psi(x,t, k)=k^{\frac{1}{2}\Lambda}\phi(x,t,k) k^{-\frac{1}{2}\Lambda},
\end{equation}
and  get an equivalent Lax pair
\begin{subequations}\label{Lax1}
\be\label{Lax1-x}
\phi_x+ik^2\Lam\phi=\tilde{U}_1\phi,
\ee
\be\label{Lax1-t}
\phi_t+2ik^4\Lam\phi=\tilde{U}_2\phi,
\ee
\end{subequations}
where
\begin{equation}\label{tildeU1U2def}
\tilde{U}_1=-Q_1\Lam-k^2Q_2\Lam+\frac{i}{2}Q^2\Lam,
 \qquad \tilde{U}_2=-2k^4Q_2\Lambda+k^2(i\Lam Q^2+iQ_{2x}-2Q_1\Lambda)+(iQ_{1x}-\frac{1}{2}[Q_x,Q]+\frac{i}{4}Q^4\Lam),
\end{equation}
\be\label{Q1Q2def}
Q_1=\left(\ba{ccc}0&q_1&q_2\\0&0&0\\0&0&0\ea\right), \qquad \qquad Q_2=\left(\ba{ccc}0&0&0\\\bar q_1&0&0\\\bar q_2&0&0\ea\right), \qquad Q=Q_1+Q_2.
\ee
Let $\lambda=k^2$, Lax pair (\ref{Lax1}) becomes
\begin{subequations}\label{Lax1'}
\be\label{Lax1'-x}
\phi_x+i\lambda\Lam\phi=\tilde{U}_1\phi,
\ee
\be\label{Lax1'-t}
\phi_t+2i\lambda^2\Lam\phi=\tilde{U}_2\phi,
\ee
\end{subequations}
where $\tilde{U}_1, \tilde{U}_2$ are given by (\ref{tildeU1U2def}) with $k^2$ replaced with $\lambda$.
\subsection{The closed one-form}

\par

Defining  a $3\times3$ matrix-value function 
\begin{equation}\label{Gdef}
G(x,t)=\left(\ba{ccc}1&0&0\\\frac{1}{2i}\bar q_1&1&0\\\frac{1}{2i}\bar q_2&0&1\ea\right),
\end{equation}
and making    a transformation
\begin{equation}
  \phi(x,t,k)=G(x,t)\mu(x,t,k) e^{-i\lambda\Lam x-2i\lambda^2\Lam t},
\end{equation}
then we get a new Lax pair for $\mu(x,t,\lambda)$
\begin{subequations}\label{muLax}
\begin{equation}
\mu_x+i\lambda[\Lam,\mu]=V_1\mu,
\ee
\be
\mu_t+2i\lambda^2[\Lam,\mu]=V_2\mu,
\ee
\end{subequations}
where
\begin{subequations}\label{V1V2def}
\begin{equation}\label{V1def}
V_1=G^{-1}(-Q_1\Lam+\frac{i}{2}Q^2\Lam)G-G^{-1}G_x,
\end{equation}
\begin{equation}\label{V2def}
V_2=\lambda G^{-1}(i\Lam Q^2+iQ_{2x}-2Q_1\Lambda)G+G^{-1}(iQ_{1x}-\frac{1}{2}[Q_x,Q]+\frac{i}{4}Q^4\Lam)G-G^{-1}G_t.
\end{equation}
\end{subequations}

\par

Letting $\hat A$ denotes the operators which acts on a $3\times 3$ matrix $X$ by $\hat A X=[A,X]$ , then   the equations   (\ref{muLax}) can be
rewritten in a differential form
\be\label{mudiffform}
d(e^{(i\lambda x+2i\lambda^2t)\hat \Lam}\mu)=W,
\ee
where the closed one-form $W(x,t,k)$ is defined by
\be\label{Wdef}
W=e^{(i\lambda x+2i\lambda^2t)\hat \Lam}(V_1dx+V_2dt)\mu.
\ee

\subsection{The eigenfunctions $\mu_j$'s}
We define four eigenfunctions $\{\mu_j\}_1^4$ of (\ref{muLax}) by the Volterra integral equations
\be\label{mujdef}
\mu_j(x,t,k)=\id+\int_{\gam_j}e^{-(i \lambda x+2i \lambda^2 t)\hat \Lam}W_j(x',t',k).\qquad j=1,2,3,4.
\ee
where $W_j$ is given by (\ref{Wdef}) with $\mu$ replaced by $\mu_j$,  and the contours $\{\gam_j\}_1^4$
 can be given by the following inequalities ( see  Figure 1):
\be
\ba{ll}
\gam_1:&x-x'\ge 0,t-t'\le 0,\\
\gam_2:&x-x'\ge 0,t-t'\ge 0,\\
\gam_3:&x-x'\le 0,t-t'\ge 0,\\
\gam_4:&x-x'\le 0,t-t'\le 0.\\
\ea
\ee

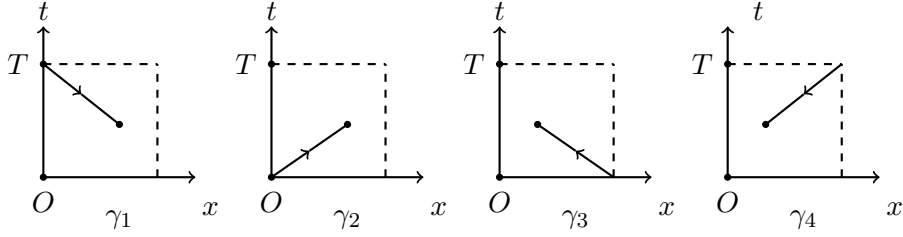
\begin{figure}
\begin{center}
\begin{tikzpicture}
\draw[thick,->](-6,0)--(-4,0);
\draw[thick,->](-6,0)--(-6,2);
\draw[thick,dashed](-6,1.5)--(-4.5,1.5);
\draw[thick,dashed](-4.5,0)--(-4.5,1.5);
\draw[thick,->](-6,1.5)--(-5.5,1.1);
\draw[thick,-](-5.5,1.1)--(-5,.7);
\draw [fill] (-6,0) circle [radius=0.04];
\draw [fill] (-6,1.5) circle [radius=0.04];
\draw [fill] (-5,.7) circle [radius=0.04];
\draw (-6,-0.06) node[below]{$O$};
\draw (-6.06,1.5) node[left] {$T$};
\draw (-5,-0.3) node[below] {$\gamma_1$} ;
\draw(-3.8,-.2) node[below] {$x$};
\draw(-6.2,2.2) node[right] {$t$};

\draw[thick,->](-3,0)--(-1,0);
\draw[thick,->](-3,0)--(-3,2);
\draw[thick,dashed](-3,1.5)--(-1.5,1.5);
\draw[thick,dashed](-1.5,0)--(-1.5,1.5);
\draw[thick,->](-3,0)--(-2.5,0.35);
\draw[thick,-](-2.5,0.35)--(-2,.7);
\draw [fill] (-3,0) circle [radius=0.04];
\draw [fill] (-3,1.5) circle [radius=0.04];
\draw [fill] (-2,.7) circle [radius=0.04];
\draw (-3,-0.06) node[below]{$O$};
\draw (-3.06,1.5) node[left] {$T$};
\draw (-2,-0.3) node[below] {$\gamma_2$} ;
\draw(-.8,-.2) node[below] {$x$};
\draw(-3.2,2.2) node[right] {$t$};

\draw[thick,->](0,0)--(2,0);
\draw[thick,->](0,0)--(0,2);
\draw[thick,dashed](0,1.5)--(1.5,1.5);
\draw[thick,dashed](1.5,0)--(1.5,1.5);
\draw[thick,->](1.5,0)--(1,.35);
\draw[thick,-](1,.35)--(.5,.7);
\draw [fill] (0,0) circle [radius=0.04];
\draw [fill] (0,1.5) circle [radius=0.04];
\draw [fill] (.5,.7) circle [radius=0.04];
\draw (0,-0.06) node[below]{$O$};
\draw (-0.06,1.5) node[left] {$T$};
\draw (1,-0.3) node[below] {$\gamma_3$} ;
\draw(2.2,-.2) node[below] {$x$};
\draw(-.2,2.2) node[right] {$t$};

\draw[thick,->](3,0)--(5,0);
\draw[thick,->](3,0)--(3,2);
\draw[thick,dashed](3,1.5)--(4.5,1.5);
\draw[thick,dashed](4.5,0)--(4.5,1.5);
\draw[thick,->](4.5,1.5)--(4,1.1);
\draw[thick,-](4,1.1)--(3.5,.7);
\draw [fill] (3,0) circle [radius=0.04];
\draw [fill] (3,1.5) circle [radius=0.04];
\draw [fill] (3.5,.7) circle [radius=0.04];
\draw (3,-0.06) node[below]{$O$};
\draw (2.94,1.5) node[left] {$T$};
\draw (4,-0.3) node[below] {$\gamma_4$} ;
\draw(5.2,-.2) node[below] {$x$};
\draw(3.2,2.2) node[right] {$t$};
\end{tikzpicture}
\end{center}
\caption{The four contours $\gamma_1,\gamma_2, \gamma_3$ and $\gamma_4$ in the $(x,t)-$domain.}
\label{FIG1}
\end{figure}
and the matrix equation (\ref{mujdef}) involves the exponentials
\be
\ba{ll}
\mbox{$[\mu_j]_1$:}&e^{2i\lambda(x-x')+4i\lambda^2(t-t')},e^{2i\lambda(x-x')+4i\lambda^2(t-t')}\\
\mbox{$[\mu_j]_2$:}&e^{-2i\lambda(x-x')-4i\lambda^2(t-t')},\\
\mbox{$[\mu_j]_3$:}&e^{-2i\lambda(x-x')-4i\lambda^2(t-t')}.
\ea
\ee
from which,  we find that   the functions $\{\mu_j\}_1^4$ are bounded and analytic for $\lambda \in\C$ such that $\lambda$ belongs to
\be\label{mujbodanydom}
\ba{ll}
\mu_1:&(D_2,D_3,D_3),\\
\mu_2:&(D_1,D_4,D_4),\\
\mu_3:&(D_3,D_2,D_2),\\
\mu_4:&(D_4,D_1,D_1),\\
\ea
\ee
where $\{D_n\}_1^4$ denote four open, pairwisely disjoint subsets of the complex $\lambda-$plane showed in Figure 2.
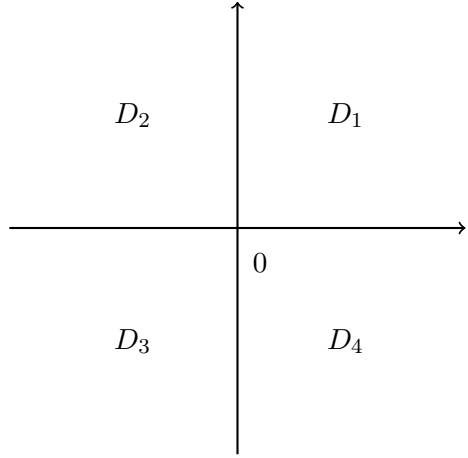
\begin{figure}
\begin{center}
\begin{tikzpicture}
\draw[thick,->](0,-3)--(0,3);
\draw[thick,->](-3,0)--(3,0);
\draw (.3,-.2) node[below] {0};
\draw (1.8,1.5) node[left] {$D_1$};
\draw (-1,1.5) node[left] {$D_2$};
\draw (-1,-1.5) node[left] {$D_3$};
\draw (1.8,-1.5) node[left] {$D_4$};
\end{tikzpicture}
\end{center}
\caption{The sets $D_n, n=1,\ldots,4$, which decompose the complex $\lambda$-plane.}
\label{FIG 2}
\end{figure}

And the sets $\{D_n\}_1^4$ admit the following properties:
\[
\ba{l}
D_1=\{k\in\C|\re{l_1}>\re{l_2}=\re{l_3},\re{z_1}>\re{z_2}=\re{z_3}\},\\
D_2=\{k\in\C|\re{l_1}>\re{l_2}=\re{l_3},\re{z_1}<\re{z_2}=\re{z_3}\},\\
D_3=\{k\in\C|\re{l_1}<\re{l_2}=\re{l_3},\re{z_1}>\re{z_2}=\re{z_3}\},\\
D_4=\{k\in\C|\re{l_1}<\re{l_2}=\re{l_3},\re{z_1}<\re{z_2}=\re{z_3}\},\\
\ea
\]
where $l_i(\lambda)$ and $z_i(\lambda)$ are the diagonal entries of matrices $i\lambda\Lam$ and $2i\lambda^2\Lam$, respectively.
\par

\subsection{The spectral functions  $M_n$'s}
For each $n=1,\ldots,4$, a solution $M_n(x,t,\lambda)$ of (\ref{muLax}) can be defined by the following system of integral equations:
\be\label{Mndef}
(M_n)_{ij}(x,t,\lambda)=\dta_{ij}+\int_{\gam_{ij}^n}\left(e^{-(i \lambda x+2i \lambda^2t)\hat \Lam}W_n(x',t',\lambda)\right)_{ij},\quad \lambda\in D_n,\quad i,j=1,2,3.
\ee
where $W_n$ is given by (\ref{Wdef}) with $\mu$ replaced with $M_n$, and the contours $\gam_{ij}^n$, $n=1,\ldots,4$, $i,j=1,2,3$ are defined by
\be\label{gamijndef}
\gam_{ij}^n=\left\{\ba{lclcl}\gam_1&if&\re l_i(\lambda)<\re l_j(\lambda)&and&\re z_i(\lambda)\ge\re z_j(\lambda),\\
\gam_2&if&\re l_i(\lambda)<\re l_j(\lambda)&and&\re z_i(\lambda)<\re z_j(\lambda),\\
\gam_3&if&\re l_i(\lambda)\ge\re l_j(\lambda)&and &\re z_i(\lambda)\le \re z_j(\lambda),\\
\gam_4&if&\re l_i(\lambda)\ge\re l_j(\lambda)&and &\re z_i(\lambda)\ge \re z_j(\lambda).\ea\right.\quad \mbox{for }\quad \lambda \in D_n.
\ee
Here, we make a distinction between the contours $\gamma_3$ and $\gamma_4$ as follows,
\begin{equation}\label{gamijn34def}
 \gamma_{ij}^n =\left\{ \ba{lcl} \gamma_3, &if & \prod_{1\le i<j\le 3}(\re l_i(\lambda)-\re l_j(\lambda))(\re z_i(\lambda)-\re z_j(\lambda))<0,\\
 \gamma_4, &if & \prod_{1\le i<j\le 3}(\re l_i(\lambda)-\re l_j(\lambda))(\re z_i(\lambda)-\re z_j(\lambda))>0. \ea \right.
\end{equation}
The rule chosen in the produce is if $l_m=l_n$, $m$ may not equals $n$, we just choose the subscript is smaller one.

According to the definition of the $\gam^n$, one find that
\be\label{gamndef}
\ba{ll}
\gam^1=\left(\ba{lll}\gam_4&\gam_4&\gam_4\\\gam_2&\gam_4&\gam_4\\\gam_2&\gam_4&\gam_4\ea\right)&
\gam^2=\left(\ba{lll}\gam_3&\gam_3&\gam_3\\\gam_1&\gam_3&\gam_3\\\gam_1&\gam_3&\gam_3\ea\right)\\
\\
\gam^3=\left(\ba{lll}\gam_3&\gam_1&\gam_1\\\gam_3&\gam_3&\gam_3\\\gam_3&\gam_3&\gam_3\ea\right)&
\gam^4=\left(\ba{lll}\gam_4&\gam_2&\gam_2\\\gam_4&\gam_4&\gam_4\\\gam_4&\gam_4&\gam_4\ea\right).
\ea
\ee
\par
The following proposition ascertains that the $M_n$'s defined in this way have the properties required for the formulation of a Riemann-Hilbert problem.
\begin{proposition}
For each $n=1,\ldots,4$, the function $M_n(x,t,\lambda)$ is well-defined by equation (\ref{Mndef}) for $\lambda\in \bar D_n$ and $(x,t)\in \Om$.  Moreover, $M_n$ admits a bounded and contious extension to $\bar D_n$ and
\be\label{Mnasy}
M_n(x,t,\lambda)=\id+O(\frac{1}{\lambda}),\qquad \lambda\rightarrow \infty,\quad \lambda\in D_n.
\ee
\end{proposition}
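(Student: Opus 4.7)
The plan is to treat (\ref{Mndef}) as an entrywise system of Volterra integral equations and run the standard Picard iteration. For fixed $\lambda \in D_n$, the $(i,j)$-entry reads $(M_n)_{ij}(x,t,\lambda) = \delta_{ij} + \int_{\gamma_{ij}^n} e^{-[(l_i-l_j)(x-x') + (z_i-z_j)(t-t')]}\bigl[(V_1\,dx' + V_2\,dt')M_n\bigr]_{ij}$. By the very construction of $\gamma_{ij}^n$ in (\ref{gamijndef})--(\ref{gamijn34def}), along the chosen contour one has $(x-x')\re(l_i-l_j) + (t-t')\re(z_i-z_j) \ge 0$, so the scalar exponential kernel has modulus at most $1$ on $\gamma_{ij}^n$ for every $\lambda \in D_n$. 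Since $\Omega$ is compact and $V_1, V_2$ — which are algebraic expressions in the initial/boundary data, the gauge $G$ and its inverse, together with their $x,t$-derivatives — are uniformly bounded on $\Omega$, the Neumann iterates for the Volterra system are dominated by a convergent scalar series and produce a unique continuous solution $M_n(x,t,\lambda)$ on $\Omega$.

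For the bounded continuous extension to $\bar D_n$, I would note that as $\lambda \to \partial D_n$ some strict inequalities defining $D_n$ become equalities; the tie-breaking convention (\ref{gamijn34def}) was introduced precisely so that $|\exp(\cdot)| \le 1$ persists in the closed region. Dominated convergence then carries the uniform bounds and the continuity of every Picard iterate through to $\bar D_n$, so $M_n$ admits a bounded and continuous extension to $\bar D_n$, with no poles appearing in the construction.

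For the asymptotic (\ref{Mnasy}), I would use integration by parts. Since $V_1$ is independent of $\lambda$ and $V_2 = \lambda V_2^{(1)} + V_2^{(0)}$ with $V_2^{(1)}, V_2^{(0)}$ independent of $\lambda$, one substitutes the ansatz $M_n = \id + \lambda^{-1} M_n^{(1)} + O(\lambda^{-2})$. For an off-diagonal index pair $(i,j)$, the identity $e^{-(l_i-l_j)(x-x')} = -\frac{1}{l_i-l_j}\partial_{x'}e^{-(l_i-l_j)(x-x')}$ (and its $t'$-analog with prefactor $1/(z_i-z_j) = O(1/\lambda^2)$) allows one to integrate by parts and extract a factor $O(1/\lambda)$, the boundary terms giving the leading behavior. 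The structural point — verifiable directly from (\ref{Q1Q2def}) and (\ref{Gdef}) — is that the $\lambda$-linear piece of $V_2$ is off-diagonal in the $\hat\Lambda$-eigenbasis, so the $1/\lambda^2$ produced by the $t'$-integration beats the $\lambda$ it multiplies. Diagonal entries $i=j$ already have exponential kernel equal to $1$, and are fed only by off-diagonal products of earlier iterates, each of size $O(1/\lambda)$, so $(M_n)_{ii} = 1 + O(1/\lambda^2)$; collecting both cases yields (\ref{Mnasy}).

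The main obstacle will be the careful bookkeeping of the integration by parts when both $x'$- and $t'$-integrations coexist on $\gamma^n$: one has to decide which differentiation produces the dominant power of $\lambda^{-1}$ on each segment, handle the boundary terms at the corners of $\gamma^n$ where one segment ends and the other begins, and confirm that the algebraic off-diagonal structure of the $O(\lambda)$ piece of $V_2$ is preserved under the conjugation by $G$. Once this structural check is in hand, the contour bounds established in the first two paragraphs close the argument.
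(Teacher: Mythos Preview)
Your proposal is correct and follows precisely the standard argument that the paper defers to: the paper's entire proof reads ``Analogous to the proof provided in \cite{l1}'', and what you have sketched---bounded exponential kernels on the chosen contours, Picard iteration on the compact domain $\Omega$, and integration by parts exploiting the off-diagonal structure of the $\lambda$-linear part of $V_2$ to extract the $O(1/\lambda)$ decay---is exactly the content of that reference. There is nothing further to compare.
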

\begin{proof}
Analogous to the proof provided in \cite{l1}
\end{proof}

\begin{remark}
Of course, for any fixed point $(x,t)$, $M_n$ is bounded and analytic as a function of $k\in D_n$ away from a possible discrete set of singularities $\{k_j\}$ at which the Fredholm determinant vanishes. The bounedness and analyticity properties are established in appendix B in \cite{l1}.
\end{remark}

\subsection{The jump matrices}

The spectral functions $\{S_n(\lambda)\}_1^4$ can be defined by
\be\label{Sndef}
S_n(\lambda)=M_n(0,0,\lambda),\qquad \lambda\in D_n,\quad n=1,\ldots,4.
\ee
Let $M$ denote the sectionally analytic function on the Riemann $\lambda-$plane which equals $M_n$ for $\lambda\in D_n$. Then $M$ satisfies the jump conditions
\be\label{Mjump}
M_n=M_mJ_{m,n},\qquad k\in \bar D_n\cap \bar D_m,\qquad n,m=1,\ldots,4,\quad n\ne m,
\ee
where the jump matrices $J_{m,n}(x,t,\lambda)$ are given by
\be\label{Jmndef}
J_{m,n}=e^{-(i \lambda x+2i \lambda^2t)\hat \Lam}(S_m^{-1}S_n).
\ee

\subsection{The adjugated eigenfunctions}
As the expressions of $S_n(\lambda)$ will involve the adjugate matrix of $\{s(\lambda), S(\lambda), S_L(\lambda) \}$ defined in the next subsection. We will also need the analyticity and boundedness  of the  the matrices $\{\mu_j(x,t,\lambda)\}_1^4$. We recall that the adjugate matrix $X^A$ of a $3\times 3$ matrix $X$ is defined by
\[
X^A=\left(
\ba{ccc}
m_{11}(X)&-m_{12}(X)&m_{13}(X)\\
-m_{21}(X)&m_{22}(X)&-m_{23}(X)\\
m_{31}(X)&-m_{32}(X)&m_{33}(X)
\ea
\right),
\]
where $m_{ij}(X)$ denote the $(ij)$th minor of $X$.
\par
It follows from (\ref{muLax}) that the adjugated eigenfunction $\mu^A$ satisfies the Lax pair
\be\label{muadgLaxe}
\left\{
\ba{l}
\mu_x^A-i\lambda[\Lam,\mu^A]=-V_1^T\mu^A,\\
\mu_t^A-2i\lambda^2[\Lam,\mu^A]=-V_2^T\mu^A.
\ea
\right.
\ee
where $V^T$ denotes the transform of a matrix $V$.
Thus, the eigenfunctions $\{\mu_j^A\}_1^4$ are solutions of the integral equations
\be\label{muadgdef}
\mu_j^A(x,t,\lambda)=\id-\int_{\gam_j}e^{i\lambda(x-x')+2i\lambda^2(t-t')\hat \Lam}(V_1^Tdx+V_2^Tdt)\mu_j^A,\quad j=1,2,3.
\ee
Then we can get the following analyticity and boundedness properties:
\be\label{mujadgbodanydom}
\ba{ll}
\mu_1^A:&(D_3,D_2,D_2),\\
\mu_2^A:&(D_4,D_1,D_1),\\
\mu_3^A:&(D_2,D_3,D_3),\\
\mu_4^A:&(D_1,D_4,D_4).
\ea
\ee

\subsection{Symmetries}
We will show that the eigenfunctions $\mu_j(x,t,k)$ satisfy an important symmetry.
\begin{proposition}
The eigenfunction $\psi(x,t,k)$ of the Lax pair (\ref{Lax}) satisfies the following symmetry:
\begin{equation}\label{symmetry}
\psi^{-1}(x,t,k)=\overline{\psi(x,t,\bar k)}^T=-\Lambda\psi(x,t,-k)\Lambda,
\end{equation}
here the superscript $T$ denotes a matrix transpose.
\begin{proof}
The matrices $U(x,t,k)$ and $V(x,t,k)$ in the Lax pair (\ref{Lax}) written in the form
\[
\psi_x=U\psi, \qquad \psi_t=V\psi,
\]
satisfy the following  symmetry relations
\begin{equation}\label{bark}
U(x,t,k)^T=-\overline{U(x,t,\bar k)}, \qquad V(x,t,k)^T=-\overline{V(x,t,\bar k)},
\end{equation}
and
\begin{equation}\label{-k}
U(x,t,k)=\Lambda U(x,t,-k)\Lambda, \qquad V(x,t,k)=\Lambda V(x,t,-k)\Lambda.
\end{equation}
In turn, relations (\ref{bark}) and (\ref{-k}) imply
\begin{equation}
\psi_x^A(x,t,k)=\overline{U(x,t,\bar k)}\psi^A(x,t,k), \qquad \psi_t^A(x,t,k)=\overline{V(x,t,\bar k)}\psi^A(x,t,k),
\end{equation}
and
\begin{equation}
\psi_x^A(x,t,k)=-\Lambda U^T(x,t,-k)\Lambda\psi^A(x,t,k), \qquad \psi_t^A(x,t,k)=-\Lambda V(x,t,-k)\Lambda \psi^A(x,t,k).
\end{equation}
\end{proof}
\end{proposition}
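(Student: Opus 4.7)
The plan is to reduce each of the two symmetry statements for $\psi$ to a pointwise algebraic symmetry of the Lax coefficients $U(x,t,k):=-ik^2\Lam+U_1$ and $V(x,t,k):=-2ik^4\Lam+U_2$, and then propagate these coefficient symmetries to $\psi$ via the uniqueness of solutions of first-order linear matrix ODEs.

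The first step is to establish (\ref{bark}) and (\ref{-k}) by direct matrix computation. The crucial algebraic properties of $Q$ in (\ref{Lamdef}) are the Hermiticity $\overline Q=Q^T$, the anticommutation $\Lam Q\Lam=-Q$ (so $\Lam$-conjugation flips the sign of the off-diagonal blocks of $Q$), and the block-diagonality of $Q^2$ in the splitting $\C\oplus\C^2$, which yields $[\Lam,Q^2]=0$. Substituting these identities term by term into the explicit expressions for $U_1$ and $U_2$ and tracking each power of $k$ reduces (\ref{bark}) and (\ref{-k}) to routine bookkeeping; for instance, $(Q\Lam)^T=\Lam Q^T=\Lam\overline Q=-\overline Q\Lam$ supplies the sign needed to match $-\overline{U(\bar k)}$, while $[\Lam,Q^2]=0$ handles the $Q^2\Lam$ term.

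For the second step I would differentiate $\psi\psi^{-1}=\id$ to obtain the right-sided Lax pair $(\psi^{-1})_x=-\psi^{-1}U(k)$ and $(\psi^{-1})_t=-\psi^{-1}V(k)$. For the first equality, conjugating $\psi_x=U(\bar k)\psi$ at $\bar k$ and transposing yields $(\overline{\psi(\bar k)}^T)_x=\overline{\psi(\bar k)}^T\,\overline{U(\bar k)}^T$; by (\ref{bark}) one has $\overline{U(\bar k)}^T=-U(k)$, so $\overline{\psi(\bar k)}^T$ satisfies the same ODE as $\psi^{-1}$, and the analogous calculation in $t$ with $V$ completes this half of the proposition. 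For the second equality, I would set $\chi(k):=-\Lam\psi(-k)\Lam$, compute $\chi_x=-\Lam U(-k)\psi(-k)\Lam$, insert $\Lam^2=\id$ in the middle, and apply (\ref{-k}) in the form $\Lam U(-k)\Lam=U(k)$ to obtain the Lax pair for $\chi$; combining this with the first equality (already proved) then closes the chain $\psi^{-1}(k)=\overline{\psi(\bar k)}^T=-\Lam\psi(-k)\Lam$.

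Finally, I would match normalizations at a single point in $\Om$ to invoke uniqueness of solutions of the first-order matrix ODE. This matching is the delicate part of the argument and is where I expect the main obstacle to lie, because the eigenfunction $\psi=k^{\frac12\Lam}\phi k^{-\frac12\Lam}$ with $\phi=G\mu e^{-i\lam\Lam x-2i\lam^2\Lam t}$ couples the Volterra normalization $\mu_j(x_j,t_j,k)=\id$ to the twist $k^{\frac12\Lam}$ and the lower-triangular gauge $G$ of (\ref{Gdef}). One must carefully track how complex conjugation at $\bar k$ and the substitution $k\mapsto -k$ (with a chosen branch of $k^{\frac12\Lam}$) interact with $G$, with the exponential $e^{-i\lam\Lam x-2i\lam^2\Lam t}$, and with the choice of Volterra contour, so that the three quantities $\psi^{-1}$, $\overline{\psi(\bar k)}^T$ and $-\Lam\psi(-k)\Lam$ take a common value at the chosen base point.
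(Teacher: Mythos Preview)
Your overall strategy coincides with the paper's: verify the coefficient symmetries (\ref{bark}) and (\ref{-k}) by direct matrix algebra, then propagate them to the eigenfunction via uniqueness for the linear ODE. The paper phrases the propagation step through the adjugate $\psi^A$ (which satisfies $\psi^A_x=-U^T\psi^A$ since $\mathrm{tr}\,U=0$), while you work directly with $\psi^{-1}$; because $\psi^A=\det(\psi)\,(\psi^T)^{-1}$ and $\det\psi$ is constant, the two routes are equivalent. Your explicit checking of the identities $\overline{Q}=Q^T$, $\Lambda Q\Lambda=-Q$, $[\Lambda,Q^2]=0$ and your attention to the normalization step are more detailed than what the paper writes: its proof simply records the two forms of the adjugate Lax pair and stops, never carrying out the uniqueness/normalization argument.

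There is, however, a real gap in your treatment of the second equality. Your own computation gives $\chi_x=-\Lambda U(-k)\psi(-k)\Lambda=U(k)\,\chi$, i.e.\ $\chi=-\Lambda\psi(-k)\Lambda$ satisfies the \emph{same} left Lax pair as $\psi$, not the right-sided equation $(\psi^{-1})_x=-\psi^{-1}U(k)$ obeyed by $\psi^{-1}$. Uniqueness therefore forces $\chi=\psi\cdot C$ for some $(x,t)$-constant matrix $C$, which yields a relation of the type $\psi(k)=\Lambda\psi(-k)\Lambda\cdot(\mathrm{const})$, not $\psi^{-1}(k)=-\Lambda\psi(-k)\Lambda$. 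Your suspicion that the normalization step is the obstacle is exactly on target: at any base point where $\psi=\id$ one has $-\Lambda\psi(-k)\Lambda=-\id\neq\id=\psi^{-1}$, so the chain cannot close as written. The paper's own proof does not resolve this either---it derives the adjugate equations and stops---so the second equality with its minus sign is not actually established by (\ref{-k}) alone in either argument. What (\ref{-k}) cleanly delivers, after matching normalizations, is $\psi(k)=\Lambda\psi(-k)\Lambda$; obtaining the form stated in (\ref{symmetry}) would require an additional ingredient that neither proof supplies.
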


\begin{remark}
From proposition 2.3, one can show that the eigenfunctions $\mu_j(x,t,\lambda)$ of Lax pair equations (\ref{muLax}) satisfy the same symmetry.
\end{remark}

\subsection{The $J_{m,n}$'s computation}

Let us define the $3\times 3-$matrix value spectral functions $s(\lambda)$, $S(\lambda)$ and $S_L(\lambda)$ by
\begin{subequations}\label{sSdef}
\be\label{mu3mu2s}
\mu_3(x,t,\lambda)=\mu_2(x,t,\lambda)e^{-(i\lambda x+2i\lambda^2t)\hat \Lam}s(\lambda), 
\ee
\be\label{mu1mu2S}
\mu_1(x,t,\lambda)=\mu_2(x,t,\lambda)e^{-(i\lambda x+2i\lambda^2t)\hat \Lam}S(\lambda), 
\ee
\be\label{mu3mu4SL}
\mu_4(x,t,\lambda)=\mu_3(x,t,\lambda)e^{-(i\lambda (x-L)+2i\lambda^2t)\hat \Lam}S_L(\lambda), 
\ee
\end{subequations}
Thus,
\begin{subequations}
\be\label{smu3}
s(\lambda)=\mu_3(0,0,\lambda),
\ee
\be\label{Smu1}
S(\lambda)=\mu_1(0,0,\lambda)=e^{2i\lambda^2T\hat \Lambda}\mu_2^{-1}(0,T,\lambda),
\ee
\be\label{SLmu4}
S_L(\lambda)=\mu_4(L,0,\lambda)=e^{2i\lambda^2T\hat \Lambda}\mu_3^{-1}(L,T,\lambda),
\ee
\end{subequations}
And we can deduce from the properties of $\mu_j$ and $\mu_j^A$  that $\{s(\lambda), S(\lambda), S_L(\lambda)\}$ and $\{s^A(\lambda), S^A(\lambda), S_L^A(\lambda)\}$ have the following boundedness properties:
\[
\ba{ll}
s(\lambda):&(D_3\cup D_4,D_1\cup D_2,D_1\cup D_2),\\
S(\lambda):&(D_2\cup D_4,D_1\cup D_3,D_1\cup D_3),\\
S_L(\lambda):&(D_2\cup D_4,D_1\cup D_3,D_1\cup D_3),\\
s^A(\lambda):&(D_1\cup D_2,D_3\cup D_4,D_3\cup D_4),\\
S^A(\lambda):&(D_1\cup D_3,D_2\cup D_4,D_2\cup D_4),\\
S_L^A(\lambda):&(D_1\cup D_3,D_2\cup D_4,D_2\cup D_4).
\ea
\]
Moreover,
\be\label{MnSnrel}
M_n(x,t,\lambda)=\mu_2(x,t,\lambda)e^{-(i\lambda x+2i\lambda^2t)\hat\Lam}S_n(\lambda),\quad \lambda \in D_n.
\ee

\begin{proposition}
The $S_n$ can be expressed in terms of the entries of $s(\lambda),S(\lambda)$ and $S_L(\lambda)$ as follows:
\begin{subequations}\label{Sn}
\be
\ba{ll}
S_1=\left(\ba{ccc}\frac{1}{m_{11}(\mathcal{A})}&\mathcal{A}_{12}&\mathcal{A}_{13}\\0&\mathcal{A}_{22}
&\mathcal{A}_{23}\\0&\mathcal{A}_{32}&\mathcal{A}_{33}\ea\right),&
S_2=\left(\ba{ccc}\frac{S_{11}}{(S^Ts^A)_{11}}&s_{12}&s_{13}\\\frac{S_{21}}{(S^Ts^A)_{11}}&s_{22}
&s_{23}\\\frac{S_{31}}{(S^Ts^A)_{11}}&s_{32}&s_{33}\ea\right),\\
\ea
\ee
\be
\ba{l}
S_3=\left(\ba{ccc}s_{11}&\frac{m_{33}(s)m_{21}(S)-m_{23}(s)m_{31}(S)}{(s^TS^A)_{11}}&\frac{m_{32}(s)m_{21}(S)-m_{22}(s)m_{31}(S)}{(s^TS^A)_{11}}\\
s_{21}&\frac{m_{33}(s)m_{11}(S)-m_{13}(s)m_{31}(S)}{(s^TS^A)_{11}}&\frac{m_{32}(s)m_{11}(S)-m_{12}(s)m_{31}(S)}{(s^TS^A)_{11}}\\
s_{31}&\frac{m_{23}(s)m_{11}(S)-m_{13}(s)m_{21}(S)}{(s^TS^A)_{11}}&\frac{m_{22}(s)m_{11}(S)-m_{12}(s)m_{21}(S)}{(s^TS^A)_{11}}\ea\right),\\
S_4=\left(\ba{ccc}\mathcal{A}_{11}&0&0\\\mathcal{A}_{21}&\frac{m_{33}(\mathcal{A})}{\mathcal{A}_{11}}&\frac{m_{32}(\mathcal{A})}{\mathcal{A}_{11}}\\\mathcal{A}_{31}&\frac{m_{23}(\mathcal{A})}{\mathcal{A}_{11}}&\frac{m_{22}(\mathcal{A})}{\mathcal{A}_{11}}\ea\right).
\ea
\ee
\end{subequations}
where $\mathcal{A}=(\mathcal{A}_{ij})_{i,j=1}^{3}$ is a $3\times 3$ matrix, which is defined as
$
\mathcal{A}=s(\lambda)e^{-i\lambda L\hat \Lam}S_L(\lambda).
$
And the functions
\[
(S^Ts^A)_{11}=S_{11}m_{11}(s)-S_{21}m_{21}(s)+S_{31}m_{31}(s),
\]
\[
(s^TS^A)_{11}=s_{11}m_{11}(S)-s_{21}m_{21}(S)+s_{31}m_{31}(S).
\]
\end{proposition}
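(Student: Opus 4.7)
The plan is to compute $S_n(\lambda) = M_n(0,0,\lambda)$ directly. Since $\gamma_2$ starts at the origin, the integral representation (\ref{mujdef}) gives $\mu_2(0,0,\lambda) = \id$, so evaluating (\ref{MnSnrel}) at $(x,t) = (0,0)$ yields $S_n(\lambda) = M_n(0,0,\lambda)$. Before anything else I would evaluate the defining relations (\ref{sSdef}) at convenient points to obtain the anchor values $\mu_1(0,0,\lambda) = S(\lambda)$, $\mu_3(0,0,\lambda) = s(\lambda)$, and (using $\mu_3(L,0,\lambda) = \id$, which follows from the fact that $\gamma_3$ starts at $(L,0)$) $\mu_4(0,0,\lambda) = \mathcal{A}(\lambda)$. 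Hence every column of $S_n$ is realizable as a column of one of $\id$, $s$, $S$, or $\mathcal{A}$, possibly rescaled.

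Next, for each $n$ I would determine the column structure of $M_n$ dictated by analyticity. From the list (\ref{mujbodanydom}), for each $D_n$ one reads off which column of which $\mu_j$ is analytic and bounded in $D_n$. For instance, when $n = 2$, column 1 of $M_2$ must be a multiple of $[\mu_1]_1$ (hence of $[S]_1$) while columns 2 and 3 coincide with $[\mu_3]_2, [\mu_3]_3$ (hence with $[s]_2, [s]_3$). When $n = 1$, column 1 comes from $\mu_2$ (i.e.\ from $[\id]_1$) while columns 2 and 3 come from $\mu_4$ (i.e.\ from $\mathcal{A}$). The case $n = 4$ is mirror-symmetric, and the case $n = 3$ mixes columns of $s$ and $S$. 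This identification pins down $S_n$ up to a rescaling of the columns that are constrained to come from a single $\mu_j$ but need not equal those columns themselves.

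The scaling is fixed by two conditions. First, $\det \mu_j \equiv 1$ (since $\mathrm{tr}\,V_1 = \mathrm{tr}\,V_2 = 0$ by inspection of (\ref{V1V2def}), using $\det G = 1$) implies $\det S_n \equiv 1$. Second, the triangular vanishing pattern of $S_n$ coming from the contour matrix $\gamma^n$ in (\ref{gamndef}) forces certain entries to be zero. Imposing these on the column structure identified above and solving the resulting small linear systems by Cramer's rule produces exactly the entries listed in (\ref{Sn}): the factors $1/m_{11}(\mathcal{A})$ in $S_1$ and $1/\mathcal{A}_{11}$ in $S_4$ come from normalizing columns taken from $\mathcal{A}$ against the single $\id$-column; the factors $1/(S^T s^A)_{11}$ and $1/(s^T S^A)_{11}$ in $S_2$ and $S_3$ arise from the determinant-one condition expanded along the column determined by $S$ (respectively $s$), via the cofactor identities $(S^T s^A)_{11} = S_{11}m_{11}(s) - S_{21}m_{21}(s) + S_{31}m_{31}(s)$ and its analogue.

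The main obstacle is the $S_3$ formula, whose entries are cofactor differences such as $m_{33}(s) m_{21}(S) - m_{23}(s) m_{31}(S)$ rather than simple entries of $s$ or $S$. These arise because in $D_3$ the columns 2, 3 of $M_3$ must be built as linear combinations of columns of $\mu_1$ that remain bounded in $D_3$, subject to the column-1 constraint fixed by $\mu_3$ and to $\det M_3 = 1$; the resulting $2 \times 2$ linear system, solved by Cramer's rule, produces these adjugate combinations. Here the adjugated-eigenfunction boundedness (\ref{mujadgbodanydom}) and the symmetry relation (\ref{symmetry}) are essential, because they ensure that the minors $m_{ij}(s)$ and $m_{ij}(S)$ are themselves analytic in the relevant subdomains, so that the formulas in (\ref{Sn}) admit genuine analytic continuations to all of $\bar D_n$.
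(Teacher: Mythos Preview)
Your approach is correct in spirit and reaches the same conclusion, but it is organized differently from the paper's proof. The paper does not argue column-by-column via analyticity. Instead it introduces, besides $S_n(\lambda)=M_n(0,0,\lambda)$, three further auxiliary matrices $R_n, T_n, Q_n$ obtained by evaluating $M_n$ at the other three corners $(0,T),(L,0),(L,T)$ of $\Omega$ (equations (\ref{RnTnQn})). These satisfy the exact factorizations $s=S_nT_n^{-1}$, $S=S_nR_n^{-1}$, $\mathcal{A}=S_nQ_n^{-1}$, together with the entrywise constraints $(R_n)_{ij}=0$ when $\gamma^n_{ij}=\gamma_1$, $(S_n)_{ij}=0$ when $\gamma^n_{ij}=\gamma_2$, $(T_n)_{ij}=\delta_{ij}$ when $\gamma^n_{ij}=\gamma_3$, $(Q_n)_{ij}=\delta_{ij}$ when $\gamma^n_{ij}=\gamma_4$. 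This yields, for each $n$, a purely algebraic linear system (27 scalar equations in 27 unknowns) whose explicit solution is (\ref{Sn}); no appeal to analyticity, uniqueness of bounded solutions, or $\det M_n=1$ is needed.

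Your route trades this algebra for geometry: you identify each column of $M_n$ with the unique $\mu_j$-column bounded in $D_n$ and then fix scalars by determinant and vanishing conditions. This is cleaner conceptually for $S_1,S_2,S_4$, but for $S_3$ your sketch (``a $2\times 2$ system solved by Cramer's rule'') hides real work: the mixed minors $m_{ij}(s)m_{kl}(S)$ in (\ref{Sn}) do not drop out of a naive combination of $[\mu_1]_2,[\mu_1]_3$ alone, and one must bring in the adjugate relations (or equivalently re-express columns via both $\mu_1$ and $\mu_3$) to see them. The paper's four-matrix factorization produces those minor combinations mechanically. One small slip: with the sign conventions in (\ref{mu3mu4SL}) and the definition of $\mathcal{A}$ stated in the proposition, $\mu_4(0,0,\lambda)=s(\lambda)e^{i\lambda L\hat\Lambda}S_L(\lambda)$, which differs from $\mathcal{A}$ by the sign in the exponent; this does not affect your strategy but should be reconciled when you write out the details.
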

\begin{proof}
Firstly, we define $R_n(\lambda),T_n(\lambda)$ and $Q_n(\lambda)$ as follows:
\begin{subequations}\label{RnTnQn}
\be\label{Rn}
R_n(\lambda)=e^{2i\lambda^2T\hat \Lam}M_n(0,T,\lambda),
\ee
\be\label{Tn}
T_n(\lambda)=e^{i\lambda L\hat \Lam}M_n(L,0,\lambda),
\ee
\be\label{Qn}
Q_n(\lambda)=e^{(i\lambda L+2i\lambda^2T)\hat\Lam}M_n(L,T,\lambda).
\ee
\end{subequations}
Then, we have the following relations:
\be\label{MnRnSnTn}
\left\{
\ba{l}
M_n(x,t,\lambda)=\mu_1(x,t,\lambda)e^{(i\lambda x+2i\lambda^2t)\hat \Lam}R_n(\lambda),\\
M_n(x,t,\lambda)=\mu_2(x,t,\lambda)e^{(i\lambda x+2i\lambda^2t)\hat \Lam}S_n(\lambda),\\
M_n(x,t,\lambda)=\mu_3(x,t,\lambda)e^{(i\lambda x+2i\lambda^2t)\hat \Lam}T_n(\lambda),\\
M_n(x,t,\lambda)=\mu_4(x,t,\lambda)e^{(i\lambda x+2i\lambda^2t)\hat \Lam}Q_n(\lambda).
\ea
\right.
\ee

The relations (\ref{MnRnSnTn}) imply that
\be\label{sSRnSnTn}
\ba{l}
s(\lambda)=S_n(\lambda)T^{-1}_n(\lambda),\\
S(\lambda)=S_n(\lambda)R^{-1}_n(\lambda),\\
\mathcal{A}(\lambda)=S_n(\lambda)Q^{-1}_n(\lambda).
\ea
\ee
These equations constitute a matrix factorization problem which, given $\{s(\lambda),S(\lambda),S_L(\lambda)\}$ can be solved for the $\{R_n,S_n,T_n,Q_n\}$. Indeed, the integral equations (\ref{Mndef}) together with the definitions of $\{R_n,S_n,T_n,Q_n\}$ imply that
\be
\left\{
\ba{lll}
(R_n(\lambda))_{ij}=0&if&\gam_{ij}^n=\gam_1,\\
(S_n(\lambda))_{ij}=0&if&\gam_{ij}^n=\gam_2,\\
(T_n(\lambda))_{ij}=\dta_{ij}&if&\gam_{ij}^n=\gam_3,\\
(Q_n(\lambda))_{ij}=\dta_{ij}&if&\gam_{ij}^n=\gam_4.
\ea
\right.
\ee
It follows that (\ref{sSRnSnTn}) are 27 scalar equations for 27 unknowns. By computing the explicit solution of this algebraic system, we arrive at (\ref{Sn}).
\end{proof}

\begin{remark}
Due to our symmetry, see Lemma \ref{symmetry}, the representation of the functions $S_n(\lambda)$ can be become simple. It leads to  much more simple to compute the jump matrices $J_{m,n}(x,t,\lambda)$.
\end{remark}

\subsection{The residue conditions}
Since $\mu_2$ is an entire function, it follows from (\ref{MnSnrel}) that M can only have sigularities at the points where the $S_n's$ have singularities.
We denote the possible zeros by $\{\lambda_j\}_1^N$ and assume they satisfy the following assumption.
We assume that
\begin{itemize}
\item $m_{11}(\mathcal{A})(\lambda)$ has $n_0$ possible simple zeros in $D_1$ denoted by $\{\lambda_j\}_1^{n_0}$;
\item $(S^Ts^A)_{11}(k)$ has $n_1-n_0$ possible simple zeros in $D_2$ denoted by $\{\lambda_j\}_{n_0+1}^{n_1}$;
\item $(s^TS^A)_{11}(k)$ has $n_2-n_1$ possible simple zeros in $D_3$ denoted by $\{\lambda_j\}_{n_1+1}^{n_2}$;
\item $\mathcal{A}_{11}(k)$ has $N-n_2$ possible simple zeros in $D_4$ denoted by $\{\lambda_j\}_{n_2+1}^{N}$;
\end{itemize}
and that none of these zeros coincide. Moreover, we assume that none of these functions have zeros on the boundaries of the $D_n$'s.
We determine the residue conditions at these zeros in the following:
\begin{proposition}\label{propos}
Let $\{M_n\}_1^4$ be the eigenfunctions defined by (\ref{Mndef}) and assume that the set $\{\lambda_j\}_1^N$ of singularities are as the above assumption. Then the following residue conditions hold:
\begin{subequations}
\be\label{M11D1res}
{Res}_{\lambda=\lambda_j}[M]_1=\frac{\mathcal{A}_{33}(\lambda_j)[M(\lambda_j)]_2-\mathcal{A}_{23}(\lambda_j)[M(\lambda_j)]_3}{\dot m_{11}(\mathcal{A})(\lambda_j)m_{21}(\mathcal{A})(\lambda_j)}e^{2\tha(\lambda_j)},\quad 1\le j\le n_0,\lambda_j\in D_1
\ee
\be\label{M21D2res}
\ba{r}
Res_{\lambda=\lambda_j}[M]_1=\frac{S_{21}(\lambda_j)s_{33}(\lambda_j)-S_{31}(\lambda_j)s_{23}(\lambda_j)}{\dot{(S^Ts^A)_{11}(\lambda_j)}m_{11}(\lambda_j)}e^{2\tha(\lambda_j)}[M(\lambda_j)]_2\\
{}+\frac{S_{31}(\lambda_j)s_{22}(\lambda_j)-S_{21}(\lambda_j)s_{32}(\lambda_j)}{\dot{(S^Ts^A)_{11}(\lambda_j)}m_{11}(\lambda_j)}e^{2\tha(\lambda_j)}[M(\lambda_j)]_3\\
\quad n_0+1\le j\le n_1,\lambda_j\in D_2,
\ea
\ee
\be\label{M32D3res}
\ba{r}
Res_{\lambda=\lambda_j}[M]_2=\frac{m_{33}(s)(\lambda_j)m_{21}(S)(\lambda_j)-m_{23}(s)(\lambda_j)m_{31}(S)(\lambda_j)}{\dot{(s^TS^A)_{11}(\lambda_j)}s_{11}(\lambda_j)}e^{-2\tha(\lambda_j)}[M(\lambda_j)]_1\\
\quad n_1+1\le j\le n_2,\lambda_j\in D_3,
\ea
\ee

\be\label{M33D3res}
\ba{r}
Res_{\lambda=\lambda_j}[M]_3=\frac{m_{32}(s)(\lambda_j)m_{21}(S)(\lambda_j)-m_{22}(s)(\lambda_j)m_{31}(S)(\lambda_j)}{\dot{(s^TS^A)_{11}(\lambda_j)}s_{11}(\lambda_j)}e^{-2\tha(\lambda_j)}[M(\lambda_j)]_1\\
\quad n_1+1\le j\le n_2,\lambda_j\in D_3.
\ea
\ee
\be\label{M42D4res}
Res_{\lambda=\lambda_j}[M]_2=\frac{m_{33}(\mathcal{A})(\lambda_j)}{\dot{\mathcal{A}}_{11}(\lambda_j) \mathcal{A}_{21}(\lambda_j)}e^{-2\tha(\lambda_j)}[M(\lambda_j)]_1,
\quad n_2+1\le j\le N,\lambda_j\in D_4.
\ee
\be\label{M43D4res}
Res_{\lambda=\lambda_j}[M]_3=\frac{m_{22}(\mathcal{A})(\lambda_j)}{\dot{\mathcal{A}}_{11}(\lambda_j) \mathcal{A}_{21}(\lambda_j)}e^{-2\tha(\lambda_j)}[M(\lambda_j)]_1,
\quad n_2+1\le j\le N,\lambda_j\in D_4.
\ee
\end{subequations}
where $\dot f=\frac{df}{d\lambda}$, and $\tha$ is defined by
\be\label{thaijdef}
\tha(x,t,\lambda)=i\lambda x+2i\lambda^2t.
\ee
\end{proposition}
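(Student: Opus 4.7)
\textbf{Proof plan for Proposition \ref{propos}.}

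The plan is to treat the four families of residue conditions separately, according to which domain $D_n$ contains the zero $\lam_j$. In every case the starting point is the representation
\[
M_n(x,t,\lam)=\mu_2(x,t,\lam)\,e^{-(i\lam x+2i\lam^2 t)\hat\Lam}S_n(\lam),\qquad \lam\in D_n,
\]
together with the explicit formulas for $S_n$ given in Proposition 2.4. For each column $[M_n]_i$ one expands the product above; the exponential factor $e^{-\theta\hat\Lam}$ with $\theta(\lam_j)=i\lam_j x+2i\lam_j^2 t$ acts as $e^{2\theta}$ on the off-diagonal positions $(2,1),(3,1)$ and as $e^{-2\theta}$ on $(1,2),(1,3)$ (all other positions carry weight $1$ since $\Lam_{22}=\Lam_{33}$). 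The columns that survive as singular at $\lam_j$ are read off from the denominators in Proposition 2.4.

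For $\lam_j\in D_1$, only $(S_1)_{11}=1/m_{11}(\mathcal{A})$ is singular and the other entries of the first column of $S_1$ vanish, so $[M_1]_1=[\mu_2]_1/m_{11}(\mathcal{A})$ and the residue equals $[\mu_2]_1(\lam_j)/\dot m_{11}(\mathcal{A})(\lam_j)$. To rewrite $[\mu_2]_1(\lam_j)$ in terms of the regular columns $[M]_2,[M]_3$, I would use the identities
\[
[M_1]_2=e^{-2\theta}[\mu_2]_1\mathcal{A}_{12}+[\mu_2]_2\mathcal{A}_{22}+[\mu_2]_3\mathcal{A}_{32},\qquad [M_1]_3=e^{-2\theta}[\mu_2]_1\mathcal{A}_{13}+[\mu_2]_2\mathcal{A}_{23}+[\mu_2]_3\mathcal{A}_{33},
\]
and form the linear combination whose weights annihilate $[\mu_2]_2,[\mu_2]_3$ thanks to $m_{11}(\mathcal{A})(\lam_j)=\mathcal{A}_{22}\mathcal{A}_{33}-\mathcal{A}_{23}\mathcal{A}_{32}=0$; the surviving coefficient of $[\mu_2]_1$ is precisely $e^{-2\theta}m_{21}(\mathcal{A})$, which yields (\ref{M11D1res}). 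The case $\lam_j\in D_4$ is completely analogous, with the second and third columns of $M_4$ becoming singular through $\mathcal{A}_{11}(\lam_j)=0$, and the role of $m_{21}(\mathcal{A})$ played by the other minors of $\mathcal{A}$ appearing in Proposition 2.4.

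For $\lam_j\in D_2$, the vanishing of $(S^T s^A)_{11}=S_{11}m_{11}(s)-S_{21}m_{21}(s)+S_{31}m_{31}(s)$ makes the entire first column of $S_2$ singular, so the residue of $[M_2]_1$ involves all three $[\mu_2]_i$. Here I would first substitute the formulas for $(S_2)_{i1}$ and then use relation (\ref{mu1mu2S}), $\mu_1=\mu_2 e^{-\theta\hat\Lam}S$, to convert $[\mu_2]_i$ combinations into $[\mu_1]_1$, which is the column that \emph{is} analytic at $\lam_j\in D_2$ (by (\ref{mujbodanydom})). The vanishing linear combination $(S^T s^A)_{11}=0$ at $\lam_j$ is exactly what makes this rewriting possible and produces a finite coefficient of $[M(\lam_j)]_2,[M(\lam_j)]_3$ with the exponential factor $e^{2\theta}$. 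The case $\lam_j\in D_3$ is handled identically, but now swapping the roles of $s$ and $S$, using (\ref{mu3mu2s}) and the analytic column $[\mu_3]_1$; the vanishing of $(s^T S^A)_{11}$ supplies the single linear relation needed to cancel the two singular contributions and yields (\ref{M32D3res})--(\ref{M33D3res}).

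The main obstacle is purely algebraic bookkeeping rather than any conceptual difficulty: one has to match, entry by entry, the minors of $s$, $S$, and $\mathcal{A}$ appearing in Proposition 2.4 with the precise linear combinations of $[\mu_2]_i$'s (carrying correct exponential weights from $e^{-\theta\hat\Lam}$) that produce the desired columns $[M]_k$ after the cancellations made possible by the vanishing of the relevant minors. The cases $D_1,D_4$ are quick because a single entry of $S_n$ is singular; the cases $D_2,D_3$ are where the care is needed, since three-term cancellations must be executed using a single scalar relation, and it is precisely for this step that the analyticity domains in (\ref{mujbodanydom}) and the auxiliary relations (\ref{mu3mu2s})--(\ref{mu3mu4SL}) must be invoked to identify the correct ``good'' column of $\mu_j$ at $\lam_j$.
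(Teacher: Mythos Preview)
Your plan is correct and essentially the same as the paper's: write $M_n=\mu_2\,e^{-\theta\hat\Lam}S_n$, read off the columns using the explicit $S_n$ from Proposition~2.4, and express the residue of the singular column(s) in terms of the regular one(s) by elementary linear algebra. For $D_1$ your computation coincides with the paper's verbatim.

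One simplification worth noting: for $D_3$ (and by symmetry $D_2,D_4$) the paper does \emph{not} bring in $\mu_1$ or $\mu_3$, nor does it use the vanishing of $(s^TS^A)_{11}$ to force a cancellation. It simply solves the regular relation
\[
[M_3]_1=s_{11}[\mu_2]_1+s_{21}e^{2\theta}[\mu_2]_2+s_{31}e^{2\theta}[\mu_2]_3
\]
for $[\mu_2]_1$ and substitutes into the expressions for $[M_3]_2,[M_3]_3$. This produces an identity valid for \emph{all} $\lambda$, with a singular piece proportional to $e^{-2\theta}[M_3]_1/(s^TS^A)_{11}$ and a remainder $\tfrac{m_{33}(s)}{s_{11}}[\mu_2]_2+\tfrac{m_{23}(s)}{s_{11}}[\mu_2]_3$ that is manifestly regular; taking the residue is then immediate. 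So the ``three-term cancellation executed via a single scalar relation'' that you anticipate is not actually needed---the algebra closes up without invoking the zero condition. Your observation that $[M_3]_1=[\mu_3]_1$ (and analogously $(S^Ts^A)_{11}[M_2]_1=[\mu_1]_1$) is correct and explains conceptually \emph{why} these columns are regular, but the paper's argument never needs to name it.
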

\begin{proof}
We will prove (\ref{M11D1res}),  (\ref{M32D3res}),  the other conditions follow by similar arguments.
Equation (\ref{MnSnrel}) implies the relation
\begin{subequations}
\be\label{M1S1}
M_1=\mu_2e^{(i\lambda x+2i\lambda^2t)\hat\Lam}S_1,
\ee

\be\label{M3S3}
M_3=\mu_2e^{(i\lambda x+2i\lambda^2t)\hat\Lam}S_3,
\ee

\end{subequations}
In view of the expressions for $S_1$ and $S_3$ given in (\ref{Sn}), the three columns of (\ref{M1S1}) read:
\begin{subequations}
\be\label{M11}
[M_1]_1=[\mu_2]_1\frac{1}{m_{11}(\mathcal{A})},
\ee
\be\label{M12}
[M_1]_2=[\mu_2]_1e^{-2\tha}\mathcal{A}_{12}+[\mu_2]_2\mathcal{A}_{22}+[\mu_2]_3\mathcal{A}_{32},
\ee
\be\label{M13}
[M_1]_3=[\mu_2]_1e^{-2\tha}\mathcal{A}_{13}+[\mu_2]_2\mathcal{A}_{23}+[\mu_2]_3\mathcal{A}_{33}.
\ee
\end{subequations}
while the three columns of (\ref{M3S3}) read:
\begin{subequations}
\be\label{M31}
[M_3]_1=[\mu_2]_1s_{11}+[\mu_2]_2s_{21}e^{2\tha}+[\mu_2]_3s_{31}e^{2\tha}
\ee
\be\label{M32}
\ba{rl}
[M_3]_2&=[\mu_2]_1\frac{m_{33}(s)m_{21}(S)-m_{23}(s)m_{31}(S)}{(s^TS^A)_{11}}e^{-2\tha}\\
&+[\mu_2]_2\frac{m_{33}(s)m_{11}(S)-m_{13}(s)m_{31}(S)}{(s^TS^A)_{11}}\\
&+[\mu_2]_3\frac{m_{23}(s)m_{11}(S)-m_{13}(s)m_{21}(S)}{(s^TS^A)_{11}}
\ea
\ee
\be\label{M33}
\ba{rl}
[M_3]_3&=[\mu_2]_1\frac{m_{32}(s)m_{21}(S)-m_{22}(s)m_{31}(S)}{(s^TS^A)_{11}}e^{-2\tha}\\
&+[\mu_2]_2\frac{m_{32}(s)m_{11}(S)-m_{12}(s)m_{31}(S)}{(s^TS^A)_{11}}\\
&+[\mu_2]_3\frac{m_{22}(s)m_{11}(S)-m_{12}(s)m_{21}(S)}{(s^TS^A)_{11}}.
\ea
\ee
\end{subequations}

We first suppose that $\lambda_j\in D_1$ is a simple zero of $m_{11}(\mathcal{A})(\lambda)$. Solving (\ref{M12}) and (\ref{M13}) for $[\mu_2]_1,[\mu_2]_3$ and substituting the result in to (\ref{M11}), we find
\[
[M_1]_1=\frac{\mathcal{A}_{33}[M_1]_2-\mathcal{A}_{32}[M_1]_3}{m_{11}(\mathcal{A})m_{21}(\mathcal{A})}e^{2\tha}-\frac{[\mu_2]_2}{m_{21}(\mathcal{A})}e^{2\tha}.
\]
Taking the residue of this equation at $\lambda_j$, we find the condition (\ref{M11D1res}) in the case when $\lambda_j\in D_1$.
\par
In order to prove (\ref{M32D3res}), we solve (\ref{M31}) for $[\mu_2]_1$, then substituting the result into (\ref{M32}) and (\ref{M33}), we find
\begin{subequations}
\be
[M_3]_2=\frac{m_{33}(s)}{s_{11}}[\mu_2]_2+\frac{m_{23}(s)}{s_{11}}[\mu_2]_3+\frac{m_{33}(s)m_{21}(S)-m_{23}(s)m_{31}(S)}{\dot {(s^TS^A)_{11}}s_{11}}e^{-2\tha}[M_3]_1,
\ee
\be
[M_3]_3=\frac{m_{32}(s)}{s_{11}}[\mu_2]_2+\frac{m_{22}(s)}{s_{11}}[\mu_2]_3+\frac{m_{32}(s)m_{21}(S)-m_{22}(s)m_{31}(S)}{\dot {(s^TS^A)_{11}}s_{11}}e^{-2\tha}[M_3]_1.
\ee
\end{subequations}
Taking the residue of this equation at $\lambda_j$, we find the condition (\ref{M32D3res}) in the case when $\lambda_j\in D_3$. 
\end{proof}

\subsection{The global relation}
The spectral functions $S(\lambda),S_L(\lambda)$ and $s(\lambda)$ are not independent but satisfy an important relation. Indeed, it follows from (\ref{sSdef}) that
\be
\mu_1(x,t,\lambda)e^{-(i\lambda x+2i\lambda^2t)\hat \Lam}\{S^{-1}(\lambda)s(\lambda)e^{i\lambda L\hat\Lam}S_L(\lambda)\}=\mu_4(x,t,\lambda).
\ee
Since $\mu_1(0,T,\lambda)=\id$, evaluation at $(0,T)$ yields the following global relation:
\be\label{globalrel}
S^{-1}(\lambda)s(\lambda)e^{i\lambda L\hat\Lam}S_L(\lambda)=e^{2i\lambda^2T\hat \Lam}c(T,\lambda),
\ee
where $c(T,\lambda)=\mu_4(0,T,\lambda)$.

\section{The Riemann-Hilbert problem}

The sectionally analytic function $M(x,t,\lambda)$ defined in section 2 satisfies a Riemann-Hilbert problem which can be formulated in terms of the initial and boundary values of $q_1(x,t)$ and $q_2(x,t)$. By solving this Riemann-Hilbert problem, the solution of (\ref{cGI}) can be recovered for all values of $x,t$.
\begin{theorem}
Suppose that $q_1(x,t)$ and $q_2(x,t)$ are a pair of solutions of (\ref{cGI}) in the interval domain $\Om$. Then $q_1(x,t)$ and $q_2(x,t)$ can be reconstructed from the initial value $\{q_{10}(x),q_{20}(x)\}$ and boundary values $\{g_{01}(t),g_{02}(t),g_{11}(t),g_{12}(t)\}$, $\{f_{01}(t),f_{02}(t),f_{11}(t),f_{12}(t)\}$ defined as follows,
\be\label{inibouvalu}
\ba{ll}
q_{10}(x)=q_1(x,t=0),& q_{20}(x)=q_2(x,t=0),\\
g_{01}(t)=q_1(x=0,t),& g_{02}(t)=q_2(x=0,t),\\
f_{01}(t)=q_1(x=L,t),& f_{02}(t)=q_2(x=L,t),\\
g_{11}(t)=q_{1x}(x=0,t),& g_{12}(t)=q_{2x}(x=0,t),\\
f_{11}(t)=q_{1x}(x=L,t),& f_{12}(t)=q_{2x}(x=L,t).
\ea
\ee
\par
Use the initial and boundary data to define the jump matrices $J_{m,n}(x,t,\lambda)$ in terms of the spectral functions $s(\lambda)$ and $S(\lambda),S_L(\lambda)$ by equation (\ref{sSdef}).
\par
Assume that the possible zeros $\{\lambda_j\}_1^N$ of the functions $m_{11}(\mathcal{A})(\lambda)$, $(S^Ts^A)_{11}(\lambda)$, $(s^TS^A)_{11}(\lambda)$ and $\mathcal{A}_{11}(\lambda)$ are as the  assumption in subsection 2.8.
\par
Then the solution $\{q_1(x,t),q_2(x,t)\}$ is given by
\be\label{usolRHP}
q_1(x,t)=2i\lim_{\lambda \rightarrow \infty}(\lambda M(x,t,\lambda))_{12},\quad q_2(x,t)=2i\lim_{\lambda \rightarrow \infty}(\lambda M(x,t,\lambda))_{13} .
\ee
where $M(x,t,\lambda)$ satisfies the following $3\times 3$ matrix Riemann-Hilbert problem:
\begin{itemize}
\item $M$ is sectionally meromorphic on the Riemann $\lambda-$sphere with jumps across the contours $\bar D_n\cap \bar D_m,n,m=1,\cdots, 4$, see Figure 2.
\item Across the contours $\bar D_n\cap \bar D_m$, $M$ satisfies the jump condition
      \be\label{MRHP}
      M_n(x,t,\lambda)=M_m(x,t,\lambda)J_{m,n}(x,t,\lambda),\quad \lambda \in \bar D_n\cap \bar D_m,n,m=1,2,3,4.
      \ee
\item $M(x,t,\lambda)=\id+O(\frac{1}{\lambda}),\qquad \lambda \rightarrow \infty$.
\item The residue condition of $M$ is showed in Proposition \ref{propos}.
\end{itemize}
\end{theorem}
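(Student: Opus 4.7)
\bigskip

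\noindent\textbf{Proof plan.} The theorem is essentially a summary of the spectral analysis carried out in Section~2, together with a reconstruction formula obtained from the large-$\lambda$ asymptotics of the eigenfunctions. Accordingly, the plan is to verify the four bullet points of the Riemann--Hilbert problem and then to extract $q_1,q_2$ from the $1/\lambda$-coefficient of $M$.

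\medskip

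\noindent\textbf{Step 1: Setting up the RH data.} Define $M$ sectionally by $M(x,t,\lambda)=M_n(x,t,\lambda)$ for $\lambda\in D_n$. The sectional meromorphy follows from Proposition~2.1 (boundedness and continuity of each $M_n$ up to $\bar D_n$, analyticity away from the zeros of the spectral denominators listed in Subsection~2.8). The asymptotic normalization $M=\id+O(1/\lambda)$ is exactly the content of \eqref{Mnasy}. The jump relation \eqref{MRHP} with $J_{m,n}$ expressed through $s,S,S_L$ is obtained by combining \eqref{Mjump}--\eqref{Jmndef} with the explicit representation of $S_n$ in Proposition~2.5; since $J_{m,n}=e^{-(i\lambda x+2i\lambda^{2}t)\hat\Lam}(S_m^{-1}S_n)$, this expresses the jumps entirely in terms of the initial and boundary data through the spectral functions. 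The residue conditions at the zeros $\{\lambda_j\}_1^N$ are precisely those of Proposition~\ref{propos}.

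\medskip

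\noindent\textbf{Step 2: The reconstruction formula.} The remaining task is to justify \eqref{usolRHP}. Start from \eqref{MnSnrel}, which gives
\[
M_n(x,t,\lambda)=\mu_2(x,t,\lambda)\,e^{-(i\lambda x+2i\lambda^{2}t)\hat\Lam}S_n(\lambda),\qquad \lambda\in D_n.
\]
Since $s,S,S_L\to\id$ and $\mu_2\to\id$ as $\lambda\to\infty$, Proposition~2.5 yields $S_n(\lambda)\to\id$ in each $D_n$; moreover the structure of $S_n$ established there ensures that, in the sector where $M_n$ is considered, every off-diagonal entry carrying a non-diagonal exponential $e^{\pm(2i\lambda x+4i\lambda^{2}t)}$ is multiplied by a factor that is bounded (in fact decaying). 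Hence the leading $1/\lambda$-coefficient of $M$ equals that of $\mu_2$. The plan is then to plug the expansion $\mu_2=\id+\lambda^{-1}\mu_2^{(1)}+O(\lambda^{-2})$ into the $x$-part of \eqref{muLax} and match coefficients of $\lambda^{0}$: this yields $i[\Lam,\mu_2^{(1)}]=V_1$, so
\[
(\mu_2^{(1)})_{12}=\frac{(V_1)_{12}}{i(\Lam_{11}-\Lam_{22})}=\frac{(V_1)_{12}}{2i},\qquad (\mu_2^{(1)})_{13}=\frac{(V_1)_{13}}{2i}.
\]
Using the explicit form \eqref{V1def} together with the lower-triangular shape of $G$ (so that conjugation by $G$ does not change the $(1,2)$ and $(1,3)$ entries of the strictly upper-triangular part), one checks that $(V_1)_{12}=q_1$ and $(V_1)_{13}=q_2$. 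Combining with the identification of the $1/\lambda$-coefficients of $\mu_2$ and $M$ then gives \eqref{usolRHP}.

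\medskip

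\noindent\textbf{Main obstacle.} The routine verification of the jump and residue conditions is unproblematic once Propositions~2.1, 2.5, and~\ref{propos} are in place. The delicate point is the passage from $\mu_2$ to $M$ in the reconstruction formula: one must show that the factor $e^{-(i\lambda x+2i\lambda^{2}t)\hat\Lam}S_n(\lambda)$ contributes $\id+o(1/\lambda)$ as $\lambda\to\infty$ in each $D_n$. This is precisely where the sectorial assignments in \eqref{gamijndef}--\eqref{gamndef} and the boundedness table for $s,S,S_L$ become essential: they guarantee that the potentially unbounded exponentials $e^{\pm2\theta(\lambda)}$ appear only against entries of $S_n$ that vanish sufficiently fast, so they cannot spoil the $O(1/\lambda)$ expansion used to read off $q_1$ and $q_2$.
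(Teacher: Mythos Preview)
Your proposal is correct and follows the same route as the paper, only with substantially more detail: the paper's own proof is a single sentence (``It only remains to prove (\ref{usolRHP}) and this equation follows from the large $\lambda$ asymptotics of the eigenfunctions''), relying implicitly on the expansion \eqref{mujasykinf} already computed in Section~4.1. Your Step~2 essentially rederives the $(1,2)$ and $(1,3)$ entries of that expansion from the $x$-part of \eqref{muLax}, which is fine but unnecessary once \eqref{mujasykinf} is available; note also that since every $M_n$ solves \eqref{muLax} and is normalized by \eqref{Mnasy}, it automatically has the expansion \eqref{mujasykinf}, so the detour through $M_n=\mu_2\,e^{-\theta\hat\Lam}S_n$ and the ``main obstacle'' you flag can be bypassed entirely.
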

\begin{proof}
It only remains to prove (\ref{usolRHP}) and this equation follows from the large $\lambda$ asymptotics of the eigenfunctions.
\end{proof}

\section{Non-linearizable Boundary Conditions}
A key  difficulty of initial-boundary value problems is that some of the boundary values are unkown for a well-posed problem. While we need  all boundary values to define  the spectral functions $S(\lambda)$ and $S_L(\lambda)$, and hence for the formulation of the Riemann-Hilbert problem. Our main result, Theorem 4.3, expresses the unknown boundary data  in terms of the prescribed boundary data and the initial data in terms of  the solution of a system of nonlinear integral equations.

\subsection{Asymptotics}
An analysis of (\ref{muLax}) shows that the eigenfunctions $\{\mu_j\}_1^4$ have the following asymptotics as $\lambda \rightarrow\infty$:
\be\label{mujasykinf}
\ba{l}
\mu_j(x,t,\lambda)=\id+\frac{1}{\lambda}\left(\ba{lll}\mu^{(1)}_{11}&\mu^{(1)}_{12}&\mu^{(1)}_{13}\\\mu^{(1)}_{21}&\mu^{(1)}_{22}&\mu^{(1)}_{23}\\\mu^{(1)}_{31}&\mu^{(1)}_{32}&\mu^{(1)}_{33}\ea\right)
+\frac{1}{\lambda^2}\left(\ba{lll}\mu^{(2)}_{11}&\mu^{(2)}_{12}&\mu^{(2)}_{13}\\\mu^{(2)}_{21}&\mu^{(2)}_{22}&\mu^{(2)}_{23}\\\mu^{(2)}_{31}&\mu^{(2)}_{32}&\mu^{(2)}_{33}\ea\right)
+O(\frac{1}{\lambda^3})\\
=\id+\frac{1}{\lambda}\left(\ba{ccc}\int_{(x_j,t_j)}^{(x,t)} \Dta_{11}dx +\eta_{11}dt&\frac{1}{2i} q_1&\frac{1}{2i} q_2\\
-\frac{1}{4}\bar q_{1x}+\frac{q_1}{8i}|q|^2&\int_{(x_j,t_j)}^{(x,t)}\Dta_{22}dx+\eta_{22}dt&\int_{(x_j,t_j)}^{(x,t)}\Dta_{23}dx+\eta_{23}dt\\
-\frac{1}{4}\bar q_{2x}+\frac{q_2}{8i}|q|^2&\int_{(x_j,t_j)}^{(x,t)}\Dta_{32}dx+\eta_{32}dt&\int_{(x_j,t_j)}^{(x,t)}\Dta_{33}dx+\eta_{33}dt \ea\right)\\
+\frac{1}{\lambda^2}\left(\ba{ccc}\mu^{(2)}_{11}&\frac{1}{4}\bar q_{1x}+\frac{1}{2i} (q_1 \mu^{(1)}_{22}+q_2 \mu^{(1)}_{32}) &\frac{1}{4}\bar q_{2x}+\frac{1}{2i}(q_1 \mu^{(1)}_{23}+q_2 \mu^{(1)}_{33})\\\mu^{(2)}_{21}\mu^{(2)}_{22}&\mu^{(2)}_{23}\\\mu^{(2)}_{31}&\mu^{(2)}_{32}&\mu^{(2)}_{33}\ea\right)

+O(\frac{1}{\lambda^3}).
\ea
\ee
where
\begin{equation}
  |q|^2=|q_1|^2+|q_2|^2,
\end{equation}
\begin{subequations}
\be\label{Dtadef}
\ba{l}
\Dta_{11}=\frac{1}{8i}|q|^4-\frac{1}{4}( q_1 \bar q_{1x}+ q_2 \bar q_{2x}),\\
\Dta_{22}=\frac{i}{8}|q_1|^2|q|^2+\frac{1}{4} q_{1}\bar q_{1x},\\
\Dta_{23}=\frac{i}{8}\bar q_1q_2|q|^2+\frac{1}{4} q_{2}\bar q_{1x},\\
\Dta_{32}=\frac{i}{8}q_1\bar q_2|q|^2+\frac{1}{4}q_{1}\bar q_{2x},\\
\Dta_{33}=\frac{i}{8}|q_2|^2|q|^2+\frac{1}{4} q_{2}\bar q_{2x}.
\ea
\ee
\be\label{etadef}
\ba{l}
\eta_{11}=\frac{1}{8i}|q|^6+\frac{1}{4}(\bar q_1 q_{1x}+\bar q_2 q_{2x}-q_1 \bar q_{1x}-q_2\bar q_{2x}) |q|^2
+\frac{1}{4i}(|q_{1x}|^2+|q_{2x}|^2)-\frac{1}{4}(q_1\bar q_{1t}+ q_2\bar q_{2t}),\\[8pt]
\eta_{22}=-\frac{1}{8i}|q_1|^2|q|^4+\frac{1}{8}|q|^2(q_1\bar{q_{1x}}-\bar{q_1}q_{1x})
+\frac{1}{8}(q_1\bar q_{1x}-\bar q_2q_{2x}+q_2\bar q_{2x}-\bar q_1 q_{1x})|q_1|^2+\frac{i}{4}|q_{1x}|^2+\frac{1}{4} q_1 \bar q_{1t},\\[8pt]
\eta_{23}=-\frac{1}{8i}\bar q_1 q_2|q|^4+\frac{1}{8}|q|^2(q_2\bar q_{1x}-\bar{q_1}q_{2x})
+\frac{1}{8}(q_1\bar q_{1x}-\bar q_2q_{2x}+q_2\bar q_{2x}-\bar q_1 q_{1x})\bar q_1 q_2+\frac{i}{4}\bar{q_{1x}}q_{2x}+\frac{1}{4} q_{2}\bar q_{1t},\\[8pt]
\eta_{32}=-\frac{1}{8i}q_1 \bar q_2|q|^4+\frac{1}{8}|q|^2(q_1\bar q_{2x}-\bar{q_2}q_{1x})
+\frac{1}{8}(q_1\bar q_{1x}-\bar q_2q_{2x}+q_2\bar q_{2x}-\bar q_1 q_{1x})q_1 \bar q_2+\frac{i}{4}q_{1x}\bar{q_{2x}}+\frac{1}{4}q_{1}\bar q_{2t},\\[8pt]
\eta_{33}=-\frac{1}{8i}|q_2|^2|q|^4+\frac{1}{8}|q|^2(q_2\bar q_{2x}-\bar{q_2}q_{2x})
+\frac{1}{8}(q_1\bar q_{1x}-\bar q_2q_{2x}+q_2\bar q_{2x}-\bar q_1 q_{1x})|q_2|^2+\frac{i}{4}|q_{2x}|^2+\frac{1}{4} q_2\bar q_{2t}.
\ea
\ee
\end{subequations}

\begin{remark}
The explicit formulas of $\mu_{11}^{(2)}$ and $\mu_{ij}^{(2)}, i, j= 2, 3$ are not presented in the following analysis, we do not write down the asymptotic expressions of these functions.
\end{remark}

Next, we define functions $\{\Phi_{ij}(t,\lambda)\}_{i,j=1}^{3}$  and ${\phi_{ij}(t,\lambda)}_{i,j=1}^3$ by:
\be
\mu_2(0,t,\lambda)=\left(\ba{lll}\Phi_{11}(t,\lambda)&\Phi_{12}(t,\lambda)&\Phi_{13}(t,\lambda)\\
\Phi_{21}(t,\lambda)&\Phi_{22}(t,\lambda)&\Phi_{23}(t,\lambda)\\\Phi_{31}(t,\lambda)&\Phi_{32}(t,\lambda)&\Phi_{33}(t,\lambda)\ea\right),
\ee
\be
\mu_3(L,t,\lambda)=\left(\ba{lll}\phi_{11}(t,\lambda)&\phi_{12}(t,\lambda)&\phi_{13}(t,\lambda)\\
\phi_{21}(t,\lambda)&\phi_{22}(t,\lambda)&\phi_{23}(t,\lambda)\\ \phi_{31}(t,\lambda)&\phi_{32}(t,\lambda)&\phi_{33}(t,\lambda)\ea\right).
\ee

\par
From the asymptotic of $\mu_j(x,t,\lambda)$ in (\ref{mujasykinf}) we have
\be\label{mu2x0tk}
\ba{l}
\mu_{2}(0,t,\lambda)=\id+\frac{1}{\lambda}\left(\ba{ccc}\Phi_{11}^{(1)}(t)&\Phi_{12}^{(1)}(t)&\Phi_{13}^{(1)}(t)\\
\Phi_{21}^{(1)}(t)&\Phi_{22}^{(1)}(t)&\Phi_{23}^{(1)}(t)\\\Phi_{31}^{(1)}(t)&\Phi_{32}^{(1)}(t)&\Phi_{33}(t)^{(1)}\ea\right)\\
+\frac{1}{\lambda^2}\left(\ba{ccc}\Phi_{11}^{(2)}(t)&\Phi_{12}^{(2)}(t)&\Phi_{13}^{(2)}(t)\\
\Phi_{21}^{(2)}(t)&\Phi_{22}^{(2)}(t)&\Phi_{23}^{(2)}(t)\\\Phi_{31}^{(2)}(t)&\Phi_{32}^{(2)}(t)&\Phi_{33}^{(2)}(t)\ea\right)+O(\frac{1}{\lambda^3}).
\ea
\ee

\par
Recalling  the definition of the boundary data at $x=0$, we have
\begin{equation}
\ba{ll}
\Phi_{12}^{(1)}(t)=\frac{1}{2i} g_{01}(t),&\Phi_{12}^{(2)}(t)=\frac{1}{4} g_{11}+\frac{1}{2i}( g_{01}\Phi_{22}^{(1)}+g_{02}\Phi_{32}^{(1)}),\\
\Phi_{13}^{(1)}(t)=\frac{1}{2i} g_{02}(t),&\Phi_{13}^{(2)}(t)=\frac{1}{4} g_{12}+\frac{1}{2i}(g_{01}\Phi_{23}^{(1)}+ g_{02}\Phi_{33}^{(1)}).
\ea
\end{equation}

In particular, we find the following expressions for the boundary values at $x=0$ :
\begin{subequations}\label{g}
\begin{equation}\label{g012}
g_{01}(t)=2i\Phi_{12}^{(1)}(t), \qquad g_{02}(t)=2i\Phi_{13}^{(1)}(t)
\end{equation}
\begin{equation}\label{g112}
\ba{c}
 g_{11}(t)=4\Phi_{12}^{(2)}(t)+2i( g_{01}(t)\Phi_{22}^{(1)}(t)+g_{02}\Phi_{32}^{(1)}(t)),\\
 g_{12}(t)=4\Phi_{13}^{(2)}(t)+2i(g_{01}\Phi_{23}^{(1)}(t)+g_{02}(t)\Phi_{33}^{(1)}(t)).
\ea
\end{equation}
\end{subequations}

\par
Similarly, we have the asymptotic formulas for $\mu_3(L,t,\lambda)={\phi_{ij}(t,\lambda)}_{i,j=1}^3$,
\be\label{mu3xLtk}
\ba{l}
\mu_{3}(L,t,\lambda)=\id+\frac{1}{\lambda}\left(\ba{ccc}\phi_{11}^{(1)}(t)&\phi_{12}^{(1)}(t)&\phi_{13}^{(1)}(t)\\
\phi_{21}^{(1)}(t)&\phi_{22}^{(1)}(t)&\phi_{23}^{(1)}(t)\\ \phi_{31}^{(1)}(t)&\phi_{32}^{(1)}(t)&\phi_{33}(t)^{(1)}\ea\right)\\
+\frac{1}{\lambda^2}\left(\ba{ccc}\phi_{11}^{(2)}(t)&\phi_{12}^{(2)}(t)&\phi_{13}^{(2)}(t)\\
\phi_{21}^{(2)}(t)&\phi_{22}^{(2)}(t)&\phi_{23}^{(2)}(t)\\ \phi_{31}^{(2)}(t)&\phi_{32}^{(2)}(t)&\phi_{33}^{(2)}(t)\ea\right)+O(\frac{1}{\lambda^3}).
\ea
\ee

\par
Recalling that the definition of the boundary data at $x=L$, we have
\begin{equation}
\ba{ll}
\phi_{12}^{(1)}(t)=\frac{1}{2i} f_{01}(t),&\phi_{12}^{(2)}(t)=\frac{1}{4} f_{11}+\frac{1}{2i}( f_{01}\phi_{22}^{(1)}+f_{02}\phi_{32}^{(1)}),\\
\phi_{13}^{(1)}(t)=\frac{1}{2i} f_{02}(t),&\phi_{13}^{(2)}(t)=\frac{1}{4} f_{12}+\frac{1}{2i}(f_{01}\phi_{23}^{(1)} +f_{02}\phi_{33}^{(1)}.
\ea
\end{equation}

In particular, we find the following expressions for the boundary values at $x=L$ :
\begin{subequations}\label{f}
\begin{equation}\label{f012}
f_{01}(t)=2i\phi_{12}^{(1)}(t), \qquad f_{02}(t)=2i\phi_{13}^{(1)}(t)
\end{equation}
\begin{equation}\label{f112}
\ba{c}
f_{11}(t)=4\phi_{12}^{(2)}(t)+2i( f_{01}(t)\phi_{22}^{(1)}(t)+f_{02}\phi_{32}^{(1)}(t)),\\
f_{12}(t)=4\phi_{13}^{(2)}(t)+2i(f_{01}\phi_{23}^{(1)}+ f_{02}(t)\phi_{33}^{(1)}(t)).
\ea
\end{equation}
\end{subequations}

From the global relation (\ref{globalrel})and replacing $T$ by $t$, we find
\be\label{globalrelsec}
\mu_2(0,t,\lambda)e^{-2i\lambda^2 t\hat \Lam}\{s(\lambda)e^{i\lambda L\hat \Lam }S_{L}(\lambda)\}=c(t,\lambda),\quad \lambda \in(D_3\cup D_4,D_1\cup D_2,D_1\cup D_2).
\ee
\begin{lemma}
We assuming that the initial value and boundary value are compatible at $x=0$ and $x=L$ , then in the vanishing initial value case, the global relation (\ref{globalrelsec}) implies that the large $\lambda$ behavior of $c_{j1}(t,\lambda), j=2,3$  satisfy
\begin{subequations}
\be\label{c21largek}
\begin{split}
c_{21}(t,\lambda)&=\frac{\Phi_{21}^{(1)}(t)}{\lambda}+\frac{\Phi_{21}^{(2)}(t)+\Phi_{21}^{(1)}(t)\bar \phi_{11}^{(1)}(t)}{\lambda^2}+O(\frac{1}{\lambda^3})\\
&+\left[\frac{\bar \phi_{12}^{(1)}(t)}{\lambda}+\frac{\bar \phi_{12}^{(2)}(t)+\Phi_{22}^{(1)}(t)\bar \phi_{12}^{(1)}(t)+\Phi_{23}^{(1)}(t)\bar \phi_{13}^{(1)}(t)}{\lambda^2}+O(\frac{1}{\lambda^3})\right]e^{-2i\lambda L},\quad \lambda \rightarrow \infty,
\end{split}
\ee
\be\label{c31largek}
\begin{split}
c_{31}(t,\lambda)&=\frac{\Phi_{31}^{(1)}(t)}{\lambda}+\frac{\Phi_{31}^{(2)}(t)+\Phi_{31}^{(1)}(t)\bar \phi_{11}^{(1)}(t)}{\lambda^2}+O(\frac{1}{\lambda^3})\\
&+\left[\frac{\bar \phi_{13}^{(1)}(t)}{\lambda}+\frac{\bar \phi_{13}^{(2)}(t)+\Phi_{32}^{(1)}(t)\bar \phi_{12}^{(1)}(t)+\Phi_{33}^{(1)}(t)\bar \phi_{13}^{(1)}(t)}{\lambda^2}+O(\frac{1}{\lambda^3})\right]e^{-2i\lambda L},\quad \lambda \rightarrow \infty,
\end{split}
\ee
\end{subequations}
\end{lemma}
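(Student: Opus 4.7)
The plan is to specialize the global relation (\ref{globalrelsec}) to the vanishing initial data case, reduce it to a clean product of $\mu_2(0,t,\lambda)$ and $\mu_3^{-1}(L,t,\lambda)$, and then read off the $(2,1)$ and $(3,1)$ entries by combining the two known large-$\lambda$ expansions (\ref{mu2x0tk}) and (\ref{mu3xLtk}) with the symmetry from Proposition~2.3.

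First I would observe that under the vanishing initial assumption $q_{10}\equiv q_{20}\equiv 0$, the matrices $Q$ and $G$ are trivial along $t=0$, so $V_1(x,0,\lambda)\equiv 0$. The Volterra equation (\ref{mujdef}) for $\mu_3$ restricted to $t=0$ therefore collapses, giving $\mu_3(x,0,\lambda)\equiv\id$ for all $x\in[0,L]$, and in particular $s(\lambda)=\mu_3(0,0,\lambda)=\id$. Substituting this into (\ref{globalrelsec}) together with $S_L(\lambda)=e^{2i\lambda^2 t\hat\Lam}\mu_3^{-1}(L,t,\lambda)$ (which is (\ref{SLmu4}) with $T$ replaced by $t$), and using the additivity $e^{-2i\lambda^2 t\hat\Lam}e^{(i\lambda L+2i\lambda^2 t)\hat\Lam}=e^{i\lambda L\hat\Lam}$ of the commuting diagonal operator, the global relation collapses to the key identity
\be
c(t,\lambda)=\mu_2(0,t,\lambda)\,e^{i\lambda L\hat\Lam}\,\mu_3^{-1}(L,t,\lambda).
\ee

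Next I would expand the $(j,1)$ entries for $j=2,3$. Because $\Lam=\mathrm{diag}(1,-1,-1)$, the action of $e^{i\lambda L\hat\Lam}$ on the first column multiplies row $1$ by $1$ and rows $2,3$ by $e^{-2i\lambda L}$, so
\be
c_{j1}(t,\lambda)=\Phi_{j1}(t,\lambda)\,[\mu_3^{-1}(L,t,\lambda)]_{11}+e^{-2i\lambda L}\sum_{k=2}^{3}\Phi_{jk}(t,\lambda)\,[\mu_3^{-1}(L,t,\lambda)]_{k1}.
\ee
The symmetry of Proposition~2.3 (extended to $\mu_j$ by the remark that follows it) gives $[\mu_3^{-1}(L,t,\lambda)]_{i1}=\overline{\phi_{1i}(t,\bar\lambda)}$, and conjugating the expansion (\ref{mu3xLtk}) evaluated at $\bar\lambda$ preserves the powers $1/\lambda^n$ while conjugating each coefficient. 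Inserting the resulting expansions together with (\ref{mu2x0tk}) into the display above and multiplying the formal series, I would retain all terms through $O(\lambda^{-2})$. The non-exponential part produces $\Phi_{j1}^{(1)}/\lambda+(\Phi_{j1}^{(2)}+\Phi_{j1}^{(1)}\bar\phi_{11}^{(1)})/\lambda^2$, while the exponentially decorated part produces $\bar\phi_{1j}^{(1)}/\lambda+(\bar\phi_{1j}^{(2)}+\Phi_{j2}^{(1)}\bar\phi_{12}^{(1)}+\Phi_{j3}^{(1)}\bar\phi_{13}^{(1)})/\lambda^2$, which are exactly (\ref{c21largek})--(\ref{c31largek}).

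The only delicate point is the exponential bookkeeping: one must verify that the time exponential $e^{-2i\lambda^2 t\hat\Lam}$ coming from the global relation cancels the one hidden in $S_L$ while the spatial factor survives as $e^{-2i\lambda L}$ only on rows $2$ and $3$ of the first column. Once that cancellation is settled, the remaining computation is purely algebraic multiplication of two power series in $1/\lambda$, and the cross terms $\Phi_{j2}^{(1)}\bar\phi_{12}^{(1)}+\Phi_{j3}^{(1)}\bar\phi_{13}^{(1)}$ fall out automatically from the $1/\lambda\times 1/\lambda$ contributions in the second sum. I expect no analytic obstacle beyond the compatibility hypothesis at $x=0,L$, which ensures smoothness of $\Phi_{jk}(t,\lambda)$ and $\phi_{jk}(t,\lambda)$ in $t$ and validity of the termwise asymptotic expansions.
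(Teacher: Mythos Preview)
Your reduction of the global relation is the same as the paper's: with $q_0\equiv 0$ one has $s(\lambda)=\id$, and the relation collapses to the product formula you wrote, which (together with the symmetry $\mu_3^{-1}(L,t,\lambda)_{i1}=\overline{\phi_{1i}(t,\bar\lambda)}$) is exactly the paper's equations (\ref{globalrel21})--(\ref{globalrel31}). The algebra of multiplying two $1/\lambda$--series is also the same.

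Where you and the paper diverge is in the \emph{justification} of the expansions you multiply. You invoke (\ref{mu2x0tk}) and (\ref{mu3xLtk}) directly as large--$\lambda$ asymptotics. The paper does not: instead it goes back to the $t$--part of the Lax pair at $x=0$, writes the column ODEs (\ref{Phiteqn}), and expands each column of $\mu_2(0,t,\cdot)$ with an ansatz that carries an explicit exponential correction, e.g.
\[
\bigl(\Phi_{11},\Phi_{21},\Phi_{31}\bigr)^T=\Bigl(\alpha_0+\tfrac{\alpha_1}{\lambda}+\cdots\Bigr)+\Bigl(\beta_0+\tfrac{\beta_1}{\lambda}+\cdots\Bigr)e^{4i\lambda^2 t},
\]
and similarly with $e^{-4i\lambda^2 t}$ for the second and third columns; see (\ref{Phi1albt})--(\ref{Phi3albtrsu}). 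The reason this matters is that the pure power-series expansions of the three columns of $\mu_2(0,t,\cdot)$ are valid in \emph{different} sectors $D_n$ (cf.\ (\ref{mujbodanydom})), so there is no single region in which you are entitled to multiply all of them as you do; one needs an expansion valid in the sector where $c_{j1}$ lives. The paper's ansatz provides exactly that, at the price of the extra exponential pieces. The compatibility hypothesis at $x=0$ and $x=L$ is then used concretely: it forces $\Phi_{j1}^{(1)}(0)=\Phi_{1j}^{(1)}(0)=\phi_{j1}^{(1)}(0)=\phi_{1j}^{(1)}(0)=0$ (and the analogous second--order coefficients), so the $e^{\pm 4i\lambda^2 t}$ corrections in (\ref{Phi2albtrsu})--(\ref{Phi3albtrsu}) drop out to the displayed order and one recovers the pure power series you quoted. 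Your last paragraph gestures at this (``compatibility\ldots ensures\ldots validity of the termwise asymptotic expansions''), but that sentence is precisely the content the paper spends most of its proof establishing; without the ODE/ansatz step it is an assertion, not an argument.
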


\begin{proof}

The global relation shows that under the assumption of vanishing initial value
\begin{subequations}
\begin{equation}\label{globalrel21}
c_{21}(t, \lambda)=\Phi_{21}(t, \lambda)\bar \phi_{11}(t, \bar \lambda)+\Phi_{22}(t, \lambda)\bar \phi_{12}(t, \bar\lambda)e^{-2i\lambda L}+ \Phi_{23}(t, \lambda)\bar \phi_{13}(t, \bar\lambda)e^{-2i\lambda L},
\end{equation}
\begin{equation}\label{globalrel31}
c_{31}(t, \lambda)=\Phi_{31}(t, \lambda)\bar \phi_{11}(t, \bar\lambda)+\Phi_{32}(t, \lambda)\bar \phi_{12}(t, \bar\lambda)e^{-2i\lambda L}+ \Phi_{33}(t, \lambda)\bar \phi_{13}(t, \bar\lambda)e^{-2i\lambda L},
\end{equation}
\end{subequations}

Recalling the equation
\be\label{Phit}
\mu_t+2i\lambda^2[\Lam,\mu]=V_2\mu.
\ee
\begin{subequations}\label{Phiteqn}
From the first column of the equation (\ref{Phit}) we get
\bee\label{Phi1t}
\left\{ \begin{array}{rl}
 \Phi_{11t}=&\!\!\!\! \left[\frac{1}{2}(g_{01}\bar g_{11}+g_{02}\bar g_{12})+\frac{i}{4}(|g_{01}|^2+|g_{02}|^2)^2\right]\Phi_{11}
 +(2\lambda g_{01}+ig_{11})\Phi_{21}+(2\lambda g_{02}+ig_{12})\Phi_{31},\vspace{0.08in}\\
 \\
 \Phi_{21t}=&\!\!\!\!4i\lambda^2\Phi_{21}+\left[-\frac{1}{2}\lambda (|g_{01}|^2+|g_{02}|^2)\bar g_{01}+i\lambda \bar g_{11}+\frac{i}{4}\left((2|g_{01}|^2+|g_{02}|^2)\bar g_{11}+\bar g_{01}g_{02}\bar g_{12}-\bar g_{01}^2g_{11}\right.\right.\vspace{0.08in}\\
&\left.\left.-\bar g_{01}\bar g_{02}g_{12}\right)-\frac{1}{4}\bar g_{01}(|g_{01}|^2+|g_{02}|^2)^2-\frac{1}{2i}\bar g_{01t}\right]\Phi_{11}-\frac{i}{4}(|g_{01}|^2+|g_{02}|^2)|g_{01}|^2\Phi_{21}\vspace{0.08in}\\
&-\frac{1}{2}g_{01}\bar g_{11}\Phi_{21}-\left(\frac{1}{2}g_{02}\bar g_{11}+\frac{i}{4}(|g_{01}|^2+|g_{02}|^2)\bar g_{01}g_{02}\right)\Phi_{31},\vspace{0.08in}\\
 \\
 \Phi_{31t}=&\!\!\!\!4i\lambda^2\Phi_{31}+\left[-\frac{1}{2}\lambda (|g_{01}|^2+|g_{02}|^2)\bar g_{02}+i\lambda \bar g_{12}+\frac{i}{4}\left((2|g_{02}|^2+|g_{01}|^2)\bar g_{12}+\bar g_{02}g_{01}\bar g_{11}-\bar g_{02}^2g_{12}\right.\right.\vspace{0.08in}\\
&\left.\left.-\bar g_{01}g_{02}g_{11}\right)-\frac{1}{4}\bar g_{02}(|g_{01}|^2+|g_{02}|^2)^2-\frac{1}{2i}\bar g_{02t}\right]\Phi_{11}-\frac{i}{4}(|g_{01}|^2+|g_{02}|^2)g_{01}\bar g_{02}\Phi_{21}\vspace{0.08in}\\
&-\frac{1}{2}g_{01}\bar g_{12}\Phi_{21}-\left(\frac{1}{2}g_{02}\bar g_{12}+\frac{i}{4}(|g_{01}|^2+|g_{02}|^2)|g_{02}|^2\right)\Phi_{31}.
\end{array}\right.
\ene


From the second column of the equation (\ref{Phit}) we get
\bee\label{Phi2t}
\left\{ \begin{array}{rl}
 \Phi_{12t}=&\!\!\!\! -4i\lambda^2\Phi_{12}+\left[\frac{1}{2}(g_{01}\bar g_{11}+g_{02}\bar g_{12})+\frac{i}{4}(|g_{01}|^2+|g_{02}|^2)^2\right]\Phi_{12}
 +(2\lambda g_{01}+ig_{11})\Phi_{22}\vspace{0.08in}\\
 &+(2\lambda g_{02}+ig_{12})\Phi_{32},\vspace{0.08in}\\
 \\
 \Phi_{22t}=&\!\!\!\!\left[-\frac{1}{2}\lambda (|g_{01}|^2+|g_{02}|^2)\bar g_{01}+i\lambda \bar g_{11}+\frac{i}{4}\left((2|g_{01}|^2+|g_{02}|^2)\bar g_{11}+\bar g_{01}g_{02}\bar g_{12}-\bar g_{01}^2g_{11}\right.\right.\vspace{0.08in}\\
&\left.\left.-\bar g_{01}\bar g_{02}g_{12}\right)-\frac{1}{4}\bar g_{01}(|g_{01}|^2+|g_{02}|^2)^2-\frac{1}{2i}\bar g_{01t}\right]\Phi_{12}-\frac{i}{4}(|g_{01}|^2+|g_{02}|^2)|g_{01}|^2\Phi_{22}\vspace{0.08in}\\
&-\frac{1}{2}g_{01}\bar g_{11}\Phi_{22}-\left(\frac{1}{2}g_{02}\bar g_{11}+\frac{i}{4}(|g_{01}|^2+|g_{02}|^2)\bar g_{01}g_{02}\right)\Phi_{32},\vspace{0.08in}\\
 \\
 \Phi_{32t}=&\!\!\!\!\left[-\frac{1}{2}\lambda (|g_{01}|^2+|g_{02}|^2)\bar g_{02}+i\lambda \bar g_{12}+\frac{i}{4}\left((2|g_{02}|^2+|g_{01}|^2)\bar g_{12}+\bar g_{02}g_{01}\bar g_{11}-\bar g_{02}^2g_{12}\right.\right.\vspace{0.08in}\\
&\left.\left.-\bar g_{01}g_{02}g_{11}\right)-\frac{1}{4}\bar g_{02}(|g_{01}|^2+|g_{02}|^2)^2-\frac{1}{2i}\bar g_{02t}\right]\Phi_{12}-\frac{i}{4}(|g_{01}|^2+|g_{02}|^2)g_{01}\bar g_{02}\Phi_{22}\vspace{0.08in}\\
&-\frac{1}{2}g_{01}\bar g_{12}\Phi_{22}-\left(\frac{1}{2}g_{02}\bar g_{12}+\frac{i}{4}(|g_{01}|^2+|g_{02}|^2)|g_{02}|^2\right)\Phi_{32}.
\end{array}\right.
\ene
From the third column of the equation (\ref{Phit}) we get
\bee\label{Phi3t}
\left\{ \begin{array}{rl}
 \Phi_{13t}=&\!\!\!\! -4i\lambda^2\Phi_{13}+\left[\frac{1}{2}(g_{01}\bar g_{11}+g_{02}\bar g_{12})+\frac{i}{4}(|g_{01}|^2+|g_{02}|^2)^2\right]\Phi_{13}
 +(2\lambda g_{01}+ig_{11})\Phi_{23}\vspace{0.08in}\\
 &+(2\lambda g_{02}+ig_{12})\Phi_{33},\vspace{0.08in}\\
 \\
 \Phi_{23t}=&\!\!\!\!\left[-\frac{1}{2}\lambda (|g_{01}|^2+|g_{02}|^2)\bar g_{01}+i\lambda \bar g_{11}+\frac{i}{4}\left((2|g_{01}|^2+|g_{02}|^2)\bar g_{11}+\bar g_{01}g_{02}\bar g_{12}-\bar g_{01}^2g_{11}\right.\right.\vspace{0.08in}\\
&\left.\left.-\bar g_{01}\bar g_{02}g_{12}\right)-\frac{1}{4}\bar g_{01}(|g_{01}|^2+|g_{02}|^2)^2-\frac{1}{2i}\bar g_{01t}\right]\Phi_{13}-\frac{i}{4}(|g_{01}|^2+|g_{02}|^2)|g_{01}|^2\Phi_{23}\vspace{0.08in}\\
&-\frac{1}{2}g_{01}\bar g_{11}\Phi_{23}-\left(\frac{1}{2}g_{02}\bar g_{11}+\frac{i}{4}(|g_{01}|^2+|g_{02}|^2)\bar g_{01}g_{02}\right)\Phi_{33},\vspace{0.08in}\\
 \\
 \Phi_{33t}=&\!\!\!\!\left[-\frac{1}{2}\lambda (|g_{01}|^2+|g_{02}|^2)\bar g_{02}+i\lambda \bar g_{12}+\frac{i}{4}\left((2|g_{02}|^2+|g_{01}|^2)\bar g_{12}+\bar g_{02}g_{01}\bar g_{11}-\bar g_{02}^2g_{12}\right.\right.\vspace{0.08in}\\
&\left.\left.-\bar g_{01}g_{02}g_{11}\right)-\frac{1}{4}\bar g_{02}(|g_{01}|^2+|g_{02}|^2)^2-\frac{1}{2i}\bar g_{02t}\right]\Phi_{13}-\frac{i}{4}(|g_{01}|^2+|g_{02}|^2)g_{01}\bar g_{02}\Phi_{23}\vspace{0.08in}\\
&-\frac{1}{2}g_{01}\bar g_{12}\Phi_{23}-\left(\frac{1}{2}g_{02}\bar g_{12}+\frac{i}{4}(|g_{01}|^2+|g_{02}|^2)|g_{02}|^2\right)\Phi_{33}.
\end{array}\right.
\ene
\end{subequations}
Suppose
\be\label{Phi1albt}
\left(\ba{l}\Phi_{11}\\\Phi_{21}\\\Phi_{31}\ea\right)=\left(\alpha_0(t)+\frac{\alpha_1(t)}{\lambda}+\frac{\alpha_2(t)}{\lambda^2}+\cdots\right)
+\left(\beta_0(t)+\frac{\beta_1(t)}{\lambda}+\frac{\beta_2(t)}{\lambda^2}+\cdots\right)e^{4i\lambda^2t},
\ee
where the coefficients $\alpha_j(t)$ and $\beta_{j}(t)$, $j=0,1,2,\cdots$, are independent of $k$ and are $3\times 1$ matrix functions.
\par
To determine these coefficients,we substitute the above equation into equation (\ref{Phi1t}) and use the initial conditions
\[
\alpha_0(0)+\beta_0(0)=(\ba{ccc}1&0&0\ea)^T,\quad \alpha_1(0)+\beta_1(0)=(\ba{ccc}0&0&0\ea)^T.
\]
Then we get
\be\label{Phi2albtrsu}
\ba{l}
\left(\ba{l}\Phi_{11}\\\Phi_{21}\\\Phi_{31}\ea\right)=
\left(\ba{l}1\\0\\0\ea\right)+\frac{1}{\lambda}\left(\ba{l}\Phi_{11}^{(1)}\\\Phi_{21}^{(1)}\\\Phi_{31}^{(1)}\ea\right)+\frac{1}{\lambda^2}\left(\ba{l}\Phi_{11}^{(2)}\\\Phi_{21}^{(2)}\\\Phi_{31}^{(2)}\ea\right)+O(\frac{1}{\lambda^3})\\
{}+\left[\frac{1}{\lambda}\left(\ba{c}0\\-\Phi^{(1)}_{21}(0)\\-\Phi^{(1)}_{31}(0)\ea\right)+O(\frac{1}{\lambda^2})\right]e^{4i\lambda^2t}
\ea
\ee
\par
Similarly, suppose
\be\label{Phi2albt}
\left(\ba{l}\Phi_{12}\\\Phi_{22}\\\Phi_{32}\ea\right)=\left(\alpha_0(t)+\frac{\alpha_1(t)}{\lambda}+\frac{\alpha_2(t)}{\lambda^2}+\cdots\right)
+\left(\beta_0(t)+\frac{\beta_1(t)}{\lambda}+\frac{\beta_2(t)}{\lambda^2}+\cdots\right)e^{-4i\lambda^2t},
\ee
where the coefficients $\alpha_j(t)$ and $\beta_{j}(t)$, $j=0,1,2,\cdots$, are independent of $k$ and are $3\times 1$ matrix functions.
\par
To determine these coefficients,we substitute the above equation into equation (\ref{Phi2t}) and use the initial conditions
\[
\alpha_0(0)+\beta_0(0)=(\ba{ccc}0&1&0\ea)^T,\quad \alpha_1(0)+\beta_1(0)=(\ba{ccc}0&0&0\ea)^T.
\]
Then we get
\be\label{Phi2albtrsu}
\ba{l}
\left(\ba{l}\Phi_{12}\\\Phi_{22}\\\Phi_{32}\ea\right)=
\left(\ba{l}0\\1\\0\ea\right)+\frac{1}{\lambda}\left(\ba{l}\Phi_{12}^{(1)}\\\Phi_{22}^{(1)}\\\Phi_{32}^{(1)}\ea\right)+\frac{1}{\lambda^2}\left(\ba{l}\Phi_{12}^{(2)}\\\Phi_{22}^{(2)}\\\Phi_{32}^{(2)}\ea\right)+O(\frac{1}{\lambda^3})\\
{}+\left[\frac{1}{\lambda}\left(\ba{c}-\Phi^{(1)}_{12}(0)\\0\\0\ea\right)+
\frac{1}{\lambda^2}\left(\ba{c}-\Phi^{(2)}_{12}(0)+\Phi^{(1)}_{12}(0)\Phi^{(1)}_{22}+\Phi^{(1)}_{12}(0)\Phi^{(1)}_{32}\\ \frac{1}{4i}\left(-\frac{1}{2}(|g_{01}|^2+|g_{02}|^2)\bar g _{01}+i\bar g_{11}\right)\Phi^{(1)}_{12}(0)\\\frac{1}{4i}\left(-\frac{1}{2}(|g_{01}|^2+|g_{02}|^2)\bar g _{02}+i\bar g_{12}\right)\Phi^{(1)}_{12}(0)\ea\right)+O(\frac{1}{\lambda^2})\right]e^{-4i\lambda^2t}
\ea
\ee

Similar to the derivation of $\Phi_{i2},i=1,2,3$, from (\ref{Phi3t}) we can get the asymptotic formulas of $\Phi_{i3},i=1,2,3$
\be\label{Phi3albtrsu}
\ba{l}
\left(\ba{l}\Phi_{13}\\\Phi_{23}\\\Phi_{33}\ea\right)=
\left(\ba{l}0\\0\\1\ea\right)+\frac{1}{\lambda}\left(\ba{l}\Phi_{13}^{(1)}\\\Phi_{23}^{(1)}\\\Phi_{33}^{(1)}\ea\right)+
\frac{1}{\lambda^2}\left(\ba{l}\Phi_{13}^{(2)}\\\Phi_{23}^{(2)}\\\Phi_{33}^{(2)}\ea\right)+O(\frac{1}{\lambda^3})\\
{}+\left[\frac{1}{\lambda}\left(\ba{c}-\Phi^{(1)}_{13}(0)\\0\\0\ea\right)+
\frac{1}{\lambda^2}\left(\ba{c}-\Phi^{(2)}_{13}(0)+\Phi^{(1)}_{13}(0)\Phi^{(1)}_{23}+\Phi^{(1)}_{13}(0)\Phi^{(1)}_{33}\\ \frac{1}{4i}\left(-\frac{1}{2}(|g_{01}|^2+|g_{02}|^2)\bar g _{01}+i\bar g_{11}\right)\Phi^{(1)}_{13}(0)\\\frac{1}{4i}\left(-\frac{1}{2}(|g_{01}|^2+|g_{02}|^2)\bar g _{02}+i\bar g_{12}\right)\Phi^{(1)}_{13}(0)\ea\right)+O(\frac{1}{\lambda^2})\right]e^{-4i\lambda^2t}
\ea
\ee

Similar to (\ref{Phiteqn}), we also know that  $\{\phi_{ij}\}_{i,j=1}^{3}$ satisfy the similar partial derivative equations. Substituting these formulas into the equation (\ref{globalrel21}) and noticing that we assume that the initial value and boundary value are compatible at $x=0$ and $x=L$, we get the asymptotic behavior (\ref{c21largek}) of $c_{j1}(t,\lambda)$ as $\lambda \rightarrow \infty$. Similar to prove the formula (\ref{c31largek}).

\end{proof}

\subsection{The Dirichlet and Neumann problems}
In what follows, we can derive the effective characterizations of spectral function $S(\lambda), S_L(\lambda)$ for the Dirichlet ($\{g_{01}(t), g_{02}(t)\}$ and $\{f_{01}(t), f_{02}(t)\}$  prescribed), the Neumann ($\{g_{11}(t), g_{12}(t)\}$ and $\{f_{11}(t), f_{12}(t)\}$ prescribed) problems.
\par
Define the following new functions as

\be\label{f+-}
f_-(t,\lambda)=f(t,\lambda)-f(t,-\lambda),\quad  f_+(t,\lambda)=f(t,\lambda)+f(t,-\lambda),
\ee
Introducing
\be\label{DelSig}
\Delta(k)=e^{2i\lambda L}-e^{-2i\lambda L},\quad \Sigma (k)=e^{2i\lambda L}+e^{-2i\lambda L}
\ee
Denoting $\partial D_3^0$ as the boundary contour which is not included the zeros of $\Delta(\lambda)$.
\begin{theorem}\label{maintheom}
Let $T<\infty$. Let $q_0(x)=(q_{10}(x),q_{20}(x)),0\leq x\ge L$, be two initial functions.
\par
For the Dirichlet problem it is assumed that the function $\{ g_{01}(t), g_{02}(t)\},0\le t<T$, has sufficient smoothness and is compatible with $\{q_{10}(x), q_{20}(x)$ at $x=t=0$, that is
\[
q_{10}(0)=g_{01}(0), \quad q_{20}(0)=g_{02}(0).
\]
the function $\{f_{01}(t), f){02}(t)\}, 0\leq t < T$, has sufficient smoothness and is compatible with $q_{10}(x), q_{20}(x)$ at $x=L$, that is
\[
q_{10}(L)=f_{01}(0), \quad q_{20}(L)=f_{02}(0).
\]

\par
For the Neumann problem it is assumed that the functions $\{g_{11}(t), g{12}(t)\},0\le t<T$, has sufficient smoothness and is compatible with $q_0(x)$ at $x=t=0$. The functions $\{f_{11}(t), f_{12}(t)\},0\le t<T$, has sufficient smoothness and is compatible with $q_0(x)$ at $x=L$.

\par
\par
Then the spectral function $S(\lambda), S_L(\lambda)$ is given by
\be\label{Sk}
S(\lambda)=\left(\ba{ccc}\ol{\Phi_{11}(\bar \lambda)}&e^{4i\lambda^2T}\ol{\Phi_{21}(\bar \lambda)}&e^{4i\lambda^2T}\ol{\Phi_{31}(\bar \lambda)}\\e^{-4i\lambda^2T}\ol{\Phi_{12}(\bar \lambda)}&\ol{\Phi_{22}(\bar \lambda)}&\ol{\Phi_{32}(\bar \lambda)}\\e^{-4i\lambda^2T}\ol{\Phi_{13}(\bar \lambda)}&\ol{\Phi_{23}(\bar \lambda)}&\ol{\Phi_{33}(\bar \lambda)}\ea\right)
\ee

\be\label{SLk}
S_L(\lambda)=\left(\ba{ccc}\ol{\phi_{11}(\bar \lambda)}&e^{4i\lambda^2T}\ol{\phi_{21}(\bar \lambda)}&e^{4i\lambda^2T}\ol{\phi_{31}(\bar \lambda)}\\e^{-4i\lambda^2T}\ol{\phi_{12}(\bar \lambda)}&\ol{\phi_{22}(\bar \lambda)}&\ol{\phi_{32}(\bar \lambda)}\\e^{-4i\lambda^2T}\ol{\phi_{13}(\bar \lambda)}&\ol{\phi_{23}(\bar \lambda)}&\ol{\phi_{33}(\bar \lambda)}\ea\right)
\ee

and the complex-value functions $\{\Phi_{l3}(t,\lambda)\}_{l=1}^{3}$ satisfy the following system of integral equations:
\bee\label{Phil3sys}
\left\{ \begin{array}{rl}
 \Phi_{13}(t, \lambda)=&\!\!\!\! \int_0^te^{-4i\lambda^2(t-t')}\left\{\left[\frac{1}{2}(g_{01}\bar g_{11}+g_{02}\bar g_{12})+\frac{i}{4}(|g_{01}|^2+|g_{02}|^2)^2\right]\Phi_{13}
 +(2\lambda g_{01}+ig_{11})\Phi_{23}\right.\vspace{0.08in}\\
 &\left.+(2\lambda g_{02}+ig_{12})\Phi_{33}\right\}(t',\lambda)dt',\vspace{0.08in}\\
 \\
 \Phi_{23}(t, \lambda)=&\!\!\!\!\int_0^t\left\{\left[-\frac{1}{2}\lambda (|g_{01}|^2+|g_{02}|^2)\bar g_{01}+i\lambda \bar g_{11}+\frac{i}{4}\left((2|g_{01}|^2+|g_{02}|^2)\bar g_{11}+\bar g_{01}g_{02}\bar g_{12}\right.\right.\right.\vspace{0.08in}\\
&\left.\left.\left.-\bar g_{01}^2g_{11}-\bar g_{01}\bar g_{02}g_{12}\right)-\frac{1}{4}\bar g_{01}(|g_{01}|^2+|g_{02}|^2)^2-\frac{1}{2i}\bar g_{01t}\right]\Phi_{13}-\frac{1}{2}g_{01}\bar g_{11}\Phi_{23}\right.\vspace{0.08in}\\
&\left.-\frac{i}{4}(|g_{01}|^2+|g_{02}|^2)|g_{01}|^2\Phi_{23}-\left(\frac{1}{2}g_{02}\bar g_{11}+\frac{i}{4}(|g_{01}|^2+|g_{02}|^2)\bar g_{01}g_{02}\right)\Phi_{33}\right\}(t',\lambda)dt',\vspace{0.08in}\\
 \\
 \Phi_{33}(t, \lambda)=&\!\!\!\!1+\int_0^t\left\{\left[-\frac{1}{2}\lambda (|g_{01}|^2+|g_{02}|^2)\bar g_{02}+i\lambda \bar g_{12}+\frac{i}{4}\left((2|g_{02}|^2+|g_{01}|^2)\bar g_{12}+\bar g_{02}g_{01}\bar g_{11}\right.\right.\right.\vspace{0.08in}\\
&\left.\left.\left.-\bar g_{02}^2g_{12}-\bar g_{01}g_{02}g_{11}\right)-\frac{1}{4}\bar g_{02}(|g_{01}|^2+|g_{02}|^2)^2-\frac{1}{2i}\bar g_{02t}\right]\Phi_{13}-\frac{1}{2}g_{01}\bar g_{12}\Phi_{23}\right.\vspace{0.08in}\\
&\left.-\frac{i}{4}(|g_{01}|^2+|g_{02}|^2)g_{01}\bar g_{02}\Phi_{23}-\left(\frac{1}{2}g_{02}\bar g_{12}+\frac{i}{4}(|g_{01}|^2+|g_{02}|^2)|g_{02}|^2\right)\Phi_{33}\right\}(t',\lambda)dt'.
\end{array}\right.
\ene

and $\{\Phi_{l1}(t,\lambda)\}_{l=1}^{3},\{\Phi_{l2}(t,\lambda)\}_{l=1}^{3}$ satisfy the following system of integral equations:

\bee\label{Phil1sys}
\left\{ \begin{array}{rl}
 \Phi_{11}(t, \lambda)=&\!\!\!\! 1+\int_0^t\left\{\left[\frac{1}{2}(g_{01}\bar g_{11}+g_{02}\bar g_{12})+\frac{i}{4}(|g_{01}|^2+|g_{02}|^2)^2\right]\Phi_{11}
 +(2\lambda g_{01}+ig_{11})\Phi_{21}\right.\vspace{0.08in}\\
 &\left.+(2\lambda g_{02}+ig_{12})\Phi_{31}\right\}(t',\lambda)dt',\vspace{0.08in}\\
 \\
 \Phi_{21}(t, \lambda)=&\!\!\!\!\int_0^te^{4i\lambda^2(t-t')}\left\{\left[-\frac{1}{2}\lambda (|g_{01}|^2+|g_{02}|^2)\bar g_{01}+i\lambda \bar g_{11}+\frac{i}{4}\left((2|g_{01}|^2+|g_{02}|^2)\bar g_{11}+\bar g_{01}g_{02}\bar g_{12}\right.\right.\right.\vspace{0.08in}\\
&\left.\left.\left.-\bar g_{01}^2g_{11}-\bar g_{01}\bar g_{02}g_{12}\right)-\frac{1}{4}\bar g_{01}(|g_{01}|^2+|g_{02}|^2)^2-\frac{1}{2i}\bar g_{01t}\right]\Phi_{11}-\frac{1}{2}g_{01}\bar g_{11}\Phi_{21}\right.\vspace{0.08in}\\
&\left.-\frac{i}{4}(|g_{01}|^2+|g_{02}|^2)|g_{01}|^2\Phi_{21}-\left(\frac{1}{2}g_{02}\bar g_{11}+\frac{i}{4}(|g_{01}|^2+|g_{02}|^2)\bar g_{01}g_{02}\right)\Phi_{31}\right\}(t',\lambda)dt',\vspace{0.08in}\\
 \\
 \Phi_{31}(t, \lambda)=&\!\!\!\!\int_0^te^{4i\lambda^2(t-t')}\left\{\left[-\frac{1}{2}\lambda (|g_{01}|^2+|g_{02}|^2)\bar g_{02}+i\lambda \bar g_{12}+\frac{i}{4}\left((2|g_{02}|^2+|g_{01}|^2)\bar g_{12}+\bar g_{02}g_{01}\bar g_{11}\right.\right.\right.\vspace{0.08in}\\
&\left.\left.\left.-\bar g_{02}^2g_{12}-\bar g_{01}g_{02}g_{11}\right)-\frac{1}{4}\bar g_{02}(|g_{01}|^2+|g_{02}|^2)^2-\frac{1}{2i}\bar g_{02t}\right]\Phi_{11}-\frac{1}{2}g_{01}\bar g_{12}\Phi_{21}\right.\vspace{0.08in}\\
&\left.-\frac{i}{4}(|g_{01}|^2+|g_{02}|^2)g_{01}\bar g_{02}\Phi_{21}-\left(\frac{1}{2}g_{02}\bar g_{12}+\frac{i}{4}(|g_{01}|^2+|g_{02}|^2)|g_{02}|^2\right)\Phi_{31}\right\}(t',\lambda)dt'.
\end{array}\right.
\ene

\bee\label{Phil2sys}
\left\{ \begin{array}{rl}
 \Phi_{12}(t, \lambda)=&\!\!\!\! \int_0^te^{-4i\lambda^2(t-t')}\left\{\left[\frac{1}{2}(g_{01}\bar g_{11}+g_{02}\bar g_{12})+\frac{i}{4}(|g_{01}|^2+|g_{02}|^2)^2\right]\Phi_{12}
 +(2\lambda g_{01}+ig_{11})\Phi_{22}\right.\vspace{0.08in}\\
 &\left.+(2\lambda g_{02}+ig_{12})\Phi_{32}\right\}(t',\lambda)dt',\vspace{0.08in}\\
 \\
 \Phi_{22}(t, \lambda)=&\!\!\!\!1+\int_0^t\left\{\left[-\frac{1}{2}\lambda (|g_{01}|^2+|g_{02}|^2)\bar g_{01}+i\lambda \bar g_{11}+\frac{i}{4}\left((2|g_{01}|^2+|g_{02}|^2)\bar g_{11}+\bar g_{01}g_{02}\bar g_{12}\right.\right.\right.\vspace{0.08in}\\
&\left.\left.\left.-\bar g_{01}^2g_{11}-\bar g_{01}\bar g_{02}g_{12}\right)-\frac{1}{4}\bar g_{01}(|g_{01}|^2+|g_{02}|^2)^2-\frac{1}{2i}\bar g_{01t}\right]\Phi_{12}-\frac{1}{2}g_{01}\bar g_{11}\Phi_{22}\right.\vspace{0.08in}\\
&\left.-\frac{i}{4}(|g_{01}|^2+|g_{02}|^2)|g_{01}|^2\Phi_{22}-\left(\frac{1}{2}g_{02}\bar g_{11}+\frac{i}{4}(|g_{01}|^2+|g_{02}|^2)\bar g_{01}g_{02}\right)\Phi_{32}\right\}(t',\lambda)dt',\vspace{0.08in}\\
 \\
 \Phi_{32}(t, \lambda)=&\!\!\!\!1+\int_0^t\left\{\left[-\frac{1}{2}\lambda (|g_{01}|^2+|g_{02}|^2)\bar g_{02}+i\lambda \bar g_{12}+\frac{i}{4}\left((2|g_{02}|^2+|g_{01}|^2)\bar g_{12}+\bar g_{02}g_{01}\bar g_{11}\right.\right.\right.\vspace{0.08in}\\
&\left.\left.\left.-\bar g_{02}^2g_{12}-\bar g_{01}g_{02}g_{11}\right)-\frac{1}{4}\bar g_{02}(|g_{01}|^2+|g_{02}|^2)^2-\frac{1}{2i}\bar g_{02t}\right]\Phi_{12}-\frac{1}{2}g_{01}\bar g_{12}\Phi_{22}\right.\vspace{0.08in}\\
&\left.-\frac{i}{4}(|g_{01}|^2+|g_{02}|^2)g_{01}\bar g_{02}\Phi_{22}-\left(\frac{1}{2}g_{02}\bar g_{12}+\frac{i}{4}(|g_{01}|^2+|g_{02}|^2)|g_{02}|^2\right)\Phi_{32}\right\}(t',\lambda)dt'.
\end{array}\right.
\ene

\begin{enumerate}
\item For the Dirichlet problem, the unknown Neumann boundary value $\{g_{11}(t), g_{12}(t)$ and $\{f_{11}(t), f_{12}(t)$ are given by
\begin{subequations}\label{DtoNg}
\be\label{DtoNg11}
\begin{split}
g_{11}(t)=&\frac{2}{i\pi}\int_{\partial D_3}\frac{\Sigma}{\Delta}(\lambda\Phi_{12-}+i g_{01})d\lambda -\frac{1}{\pi} \int_{\partial D_3}( g_{01} \bar \phi_{22-}+g_{02} \bar \phi_{23-})d\lambda\\
&+\frac{4}{i\pi}\int_{\partial D_3}\frac{1}{\Delta}(\lambda \bar \phi_{21-}-2\bar \phi_{21}^{(1)})d\lambda +\frac{2}{\pi}\int_{\partial D_3}(g_{01}\Phi_{22-}+g_{02}\Phi_{32-})d\lambda\\
&+\frac{4}{i\pi}\int_{\partial D_3}\frac{\lambda}{\Delta}[(\Phi_{11}-1)\bar \phi_{21}+\Phi_{12}(\bar \phi_{22}-1)e^{-2i\lambda L}+\Phi_{13}\bar \phi_{23}e^{-2i\lambda L}]_-d\lambda,
\end{split}
\ee

\be\label{DtoNg12}
\begin{split}
g_{12}(t)=&\frac{2}{i\pi}\int_{\partial D_3}\frac{\Sigma}{\Delta}(\lambda\Phi_{13-}+i g_{02})d\lambda -\frac{1}{\pi} \int_{\partial D_3}( g_{01} \bar \phi_{32-}+g_{02} \bar \phi_{33-})d\lambda\\
&+\frac{4}{i\pi}\int_{\partial D_3}\frac{1}{\Delta}(\lambda \bar \phi_{31-}-2\bar \phi_{31}^{(1)})d\lambda +\frac{2}{\pi}\int_{\partial D_3}(g_{01}\Phi_{23-}+g_{02}\Phi_{33-})d\lambda\\
&+\frac{4}{i\pi}\int_{\partial D_3}\frac{\lambda}{\Delta}[(\Phi_{11}-1)\bar \phi_{31}+\Phi_{12}\bar \phi_{32}e^{-2i\lambda L}+\Phi_{13}(\bar \phi_{33}-1)e^{-2i\lambda L}]_-d\lambda,
\end{split}
\ee
\end{subequations}
and
\begin{subequations}\label{DtoNf}
\be\label{DtoNf11}
\begin{split}
f_{11}(t)=&-\frac{2}{i\pi}\int_{\partial D_3}\frac{\Sigma}{\Delta}(\lambda\phi_{12-}+i f_{01})d\lambda -\frac{1}{\pi} \int_{\partial D_3}( f_{01} \bar \Phi_{22-}+f_{02} \bar \Phi_{23-})d\lambda\\
&-\frac{4}{i\pi}\int_{\partial D_3}\frac{1}{\Delta}(\lambda \bar \Phi_{21-}-2\bar \Phi_{21}^{(1)})d\lambda +\frac{2}{\pi}\int_{\partial D_3}(f_{01}\phi_{22-}+f_{02}\phi_{32-})d\lambda\\
&-\frac{4}{i\pi}\int_{\partial D_3}\frac{\lambda}{\Delta}[(\phi_{11}-1)\bar \Phi_{21}+\phi_{12}(\bar \Phi_{22}-1)e^{2i\lambda L}+\phi_{13}\bar \phi_{23}e^{2i\lambda L}]_-d\lambda,
\end{split}
\ee

\be\label{DtoNf12}
\begin{split}
f_{12}(t)=&-\frac{2}{i\pi}\int_{\partial D_3}\frac{\Sigma}{\Delta}(\lambda\phi_{13-}+i f_{02})d\lambda -\frac{1}{\pi} \int_{\partial D_3}( f_{01} \bar \Phi_{32-}+f_{02} \bar \Phi_{33-})d\lambda\\
&-\frac{4}{i\pi}\int_{\partial D_3}\frac{1}{\Delta}(\lambda \bar \Phi_{31-}-2\bar \Phi_{31}^{(1)})d\lambda +\frac{2}{\pi}\int_{\partial D_3}(f_{01}\phi_{23-}+f_{02}\phi_{33-})d\lambda\\
&-\frac{4}{i\pi}\int_{\partial D_3}\frac{\lambda}{\Delta}[(\phi_{11}-1)\bar \Phi_{31}+\phi_{12}\bar \Phi_{32}e^{2i\lambda L}+\phi_{13}(\bar \Phi_{33}-1)e^{2i\lambda L}]_-d\lambda.
\end{split}
\ee
\end{subequations}
where the conjugate of a function h denotes $\bar h=\overline {h(\bar \lambda)}$.

\item For the Neumann problem, the unknown boundary values $\{g_{01}(t), g_{02}(t)\}$ and  $\{f_{01}(t), f_{02}(t)\}$ are given by
\begin{subequations}\label{NtoDg}
\be\label{NtoDg01}
\begin{split}
g_{01}(t)=&\frac{1}{\pi}\int_{\partial D_3}\frac{\Sigma}{\Delta}\Phi_{12+}d\lambda
+\frac{2}{\pi}\int_{\partial D_3}\frac{1}{\Delta}\bar \phi_{21+}d\lambda \\
&+\frac{2}{\pi}\int_{\partial D_3}\frac{1}{\Delta}\left[\bar \phi_{21}(\Phi_{11}-1)+(\bar \phi_{22}-1)\Phi_{12}e^{-2i\lambda L}+\bar \phi_{23} \Phi_{13}e^{-2i\lambda L}\right]_+d\lambda,
\end{split}
\ee
\be\label{NtoDg02}
\begin{split}
g_{02}(t)=&\frac{1}{\pi}\int_{\partial D_3}\frac{\Sigma}{\Delta}\Phi_{13+}d\lambda
+\frac{2}{\pi}\int_{\partial D_3}\frac{1}{\Delta}\bar \phi_{31+}d\lambda \\
&+\frac{2}{\pi}\int_{\partial D_3}\frac{1}{\Delta}\left[\bar \phi_{31}(\Phi_{11}-1)+\bar \phi_{32}\Phi_{12}e^{-2i\lambda L}+(\bar \phi_{33}-1) \Phi_{13}e^{-2i\lambda L}\right]_+d\lambda,
\end{split}
\ee
\end{subequations}
and
\begin{subequations}\label{NtoDf}
\be\label{NtoDf01}
\begin{split}
f_{01}(t)=&-\frac{1}{\pi}\int_{\partial D_3}\frac{\Sigma}{\Delta}\phi_{12+}d\lambda
-\frac{2}{\pi}\int_{\partial D_3}\frac{1}{\Delta}\bar \Phi_{21+}d\lambda \\
&-\frac{2}{\pi}\int_{\partial D_3}\frac{1}{\Delta}\left[\bar \Phi_{21}(\phi_{11}-1)+(\bar \Phi_{22}-1)\phi_{12}e^{2i\lambda L}+\bar \Phi_{23} \phi_{13}e^{2i\lambda L}\right]_+d\lambda,
\end{split}
\ee

\be\label{NtoDf02}
\begin{split}
f_{02}(t)=&-\frac{1}{\pi}\int_{\partial D_3}\frac{\Sigma}{\Delta}\phi_{13+}d\lambda
-\frac{2}{\pi}\int_{\partial D_3}\frac{1}{\Delta}\bar \Phi_{31+}d\lambda \\
&-\frac{2}{\pi}\int_{\partial D_3}\frac{1}{\Delta}\left[\bar \Phi_{31}(\phi_{11}-1)+\bar \Phi_{32}\phi_{12}e^{2i\lambda L}+(\bar \Phi_{33}-1) \phi_{13}e^{2i\lambda L}\right]_+d\lambda.
\end{split}
\ee
\end{subequations}

\end{enumerate}
\end{theorem}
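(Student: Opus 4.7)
The plan is to break the theorem into three independent parts that use the tools already built in the excerpt, and then combine them. First, the representations \eqref{Sk}--\eqref{SLk} of the spectral functions are obtained from the defining relations \eqref{Smu1}, \eqref{SLmu4}, namely $S(\lambda)=e^{2i\lambda^2 T\hat\Lam}\mu_2^{-1}(0,T,\lambda)$ and $S_L(\lambda)=e^{2i\lambda^2 T\hat\Lam}\mu_3^{-1}(L,T,\lambda)$, combined with the symmetry $\psi^{-1}(x,t,k)=\overline{\psi(x,t,\bar k)}^T$ from Proposition 2.3 (and the Remark that this symmetry is inherited by the $\mu_j$). Inverting and then conjugating by $e^{2i\lambda^2 T\hat\Lam}$ exchanges the first column with the $(2,3)$ block in the prescribed way and produces precisely the factors $e^{\pm 4i\lambda^2 T}$ that appear in \eqref{Sk} and \eqref{SLk}, with entries $\overline{\Phi_{ij}(\bar\lambda)}$ and $\overline{\phi_{ij}(\bar\lambda)}$.

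Next, I would derive the Volterra systems \eqref{Phil3sys}, \eqref{Phil1sys}, \eqref{Phil2sys}. These are just the $t$-part of the $\mu$-Lax pair \eqref{muLax} evaluated at $x=0$, written columnwise. Writing $\mu_2(0,t,\lambda)$ in components, equation (\ref{muLax}) becomes a linear ODE system in $t$ whose off-diagonal part $2i\lambda^2[\Lam,\mu]$ produces the exponentials $e^{\pm 4i\lambda^2(t-t')}$ in front of the entries in the $(\Lam=-1)$-block relative to the $(\Lam=+1)$-block. The choice of base point in each Volterra equation is dictated by which normalization is needed so that the corresponding column of $\mu_2(0,t,\lambda)$ is bounded on the appropriate half of the $\lambda$-plane; this reproduces the constants $1$ on the diagonal entries of the first column, the $(2,2)$ entry, and the $(3,3)$ entry, and $0$ everywhere else. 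The system for $\phi_{ij}$ at $x=L$ is identical, with $\{g_{0j},g_{1j}\}$ replaced by $\{f_{0j},f_{1j}\}$.

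Then I would handle the Dirichlet-to-Neumann and Neumann-to-Dirichlet maps \eqref{DtoNg}--\eqref{NtoDf}. The idea is to exploit the asymptotic expansions \eqref{mujasykinf}, \eqref{mu2x0tk}, \eqref{mu3xLtk} which give the boundary data through the small-$1/\lambda$ coefficients: $g_{0j}$ is read off from $\Phi_{1j}^{(1)}$ while $g_{1j}$ is read off from $\Phi_{1j}^{(2)}$, cf.\ \eqref{g012}--\eqref{g112}, and analogously for $f_{0j},f_{1j}$ via $\phi_{1j}^{(1)},\phi_{1j}^{(2)}$. Cauchy's theorem applied to the analytic continuations of $\Phi_{12},\Phi_{13},\phi_{12},\phi_{13}$ into the appropriate $D_n$'s, deformed onto $\partial D_3^0$, gives each unknown as a contour integral. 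The global relation \eqref{globalrelsec} — equivalently the expansions \eqref{globalrel21}--\eqref{globalrel31} of $c_{21},c_{31}$ in Lemma 4.2 — provides the algebraic identity needed to express the boundary values of $\mu_2$ bounded in the ``wrong'' half-plane in terms of the known ones plus the exponentials $e^{\pm 2i\lambda L}$. Splitting each contributing function into its $\lambda\to -\lambda$ even/odd parts $f_{\pm}$ as in \eqref{f+-} isolates the correct parity of the integrand, which is how $\Sigma/\Delta$ and $1/\Delta$ arise in \eqref{DtoNg}--\eqref{NtoDf}.

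The main obstacle will be the Dirichlet-to-Neumann step. One has to show that, after replacing the unknown $c_{ij}$ by their representations from the global relation and deforming contours onto $\partial D_3^0$, all contributions from the entire functions (that is, from $\mu_4(0,T,\lambda)$-type terms which are analytic for all $\lambda\in\C$) vanish by Cauchy's theorem, while the cross-terms involving $\Phi_{ij}\bar\phi_{kl}e^{\pm 2i\lambda L}$ combine so that the exponential growth for large $|\lambda|$ is controlled by the denominator $\Delta(\lambda)$. The parity manipulations \eqref{f+-}, the careful tracking of which $\Phi_{ij}^{(1)},\phi_{ij}^{(1)}$ appear at subleading order in \eqref{c21largek}--\eqref{c31largek}, and the verification that the zeros of $\Delta$ are avoided by working on $\partial D_3^0$, are the delicate points; everything else is bookkeeping.
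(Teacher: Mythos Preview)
Your proposal is correct and follows essentially the same approach as the paper: the spectral function representations come from the defining relations \eqref{Smu1}, \eqref{SLmu4} plus the symmetry of Proposition~2.3; the Volterra systems are read off directly from the $t$-part of the Lax pair at $x=0$; and the Dirichlet--Neumann maps are obtained by expressing $g_{1j}$ (resp.\ $g_{0j}$) through the asymptotic coefficients $\Phi_{1j}^{(2)},\Phi_{22}^{(1)},\Phi_{32}^{(1)}$ (resp.\ $\Phi_{1j}^{(1)}$), writing these as Cauchy integrals over $\partial D_3$, and then invoking the global relation together with Lemma~4.2 to evaluate the piece (called $I(t)$ or $K(t)$ in the paper) that is not directly accessible, with the parity decomposition $f_\pm$ producing the $\Sigma/\Delta$ and $1/\Delta$ factors. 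Your identification of the ``main obstacle'' --- showing that the entire-function contributions vanish and that the remaining cross-terms are controlled by $\Delta$ --- is exactly the computation the paper carries out via \eqref{DtoNIt}--\eqref{DtoNItres}.
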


\begin{proof}
The representations (\ref{Sk}) follow from the relation $S(k)=e^{2i\lambda^2 T\hat{\Lambda}}\mu_2^{-1}(0,T,k)$. And the system (\ref{Phil3sys}) is the direct result of the Volteral integral equations of $\mu_2(0,t,k)$.
\begin{enumerate}
\item In order to derive (\ref{DtoNg11}) we note that equation (\ref{g112}) expresses $g_{11}$ in terms of $\Phi_{12}^{(2)}$ and $\Phi_{22}^{(1)}, \Phi_{32}^{(1)}$. Furthermore, equation (\ref{mu2x0tk}) and Cauchy theorem imply

    \[
    -\frac{i\pi}{2}\Phi_{22}^{(1)}(t)=\int_{\partial D_2}\left(\Phi_{22}(t,\lambda )-1\right)d\lambda
    =\int_{\partial D_4}\left(\Phi_{22}(t,\lambda )-1\right)d\lambda.
    \]
    \[
    -\frac{i\pi}{2}\Phi_{32}^{(1)}(t)=\int_{\partial D_2}\Phi_{32}(t,\lambda )d\lambda
    =\int_{\partial D_4}\Phi_{32}(t,\lambda )d\lambda.
    \]

    and
    \[
    -\frac{i\pi}{2}\Phi_{12}^{(2)}(t)=\int_{\partial D_2}\left(\lambda \Phi_{12}(t,\lambda)-\Phi_{12}^{(1)}(t)\right)d\lambda
    =\int_{\partial D_4}\left(\lambda \Phi_{12}(t,\lambda)-\Phi_{12}^{(1)}(t)\right)d\lambda,
    \]
    Thus,
    \be\label{Phi221}
    i\pi\Phi_{22}^{(1)}=\int_{\partial D_3}\Phi_{22-}(t,\lambda)d\lambda,\qquad \quad
    i\pi\Phi_{32}^{(1)}=\int_{\partial D_3}\Phi_{32-}(t,\lambda)d\lambda,
    \end{equation}
    \be\label{Phi122}
    \begin{split}
    i\pi \Phi_{12}^{(2)}(t)=&-\left(\int_{\partial D_2}+\int_{\partial D_4}\right)\left[\lambda \Phi_{12}(t,\lambda)-\Phi_{12}^{(1)}(t)\right]d\lambda\\
    =&\left(\int_{\partial D_3}+\int_{\partial D_1}\right)\left[\lambda \Phi_{12}(t,\lambda)-\Phi_{12}^{(1)}(t)\right]d\lambda\\
    =&\int_{\partial D_3}\left[\lambda \Phi_{12}(t,\lambda)-\Phi_{12}^{(1)}(t)\right]_-d\lambda \\
    =&\int_{\partial D_3^0}\left\{\lambda \Phi_{12}(t,\lambda)-\frac{ g_{01}}{2i}+\frac{2e^{-2i\lambda L}}{\Delta}[\lambda \Phi_{12}(t,\lambda)-\frac{ g_{01}}{2i}]\right\}_-d\lambda+I(t).
    \end{split}
    \ee
    where $I(t)$ is defined by
    \[
    I(t)=-\int_{\partial D_3^0}\left\{\frac{2e^{-2i\lambda L}}{\Delta}[\lambda \Phi_{12}(t,\lambda)-\frac{g_{01}}{2i}]\right\}_-d\lambda.
    \]

   The last step involves using the global relation (\ref{globalrel21}) to compute $I(t)$, that is
    \be\label{DtoNIt}
    \begin{split}
    I(t)=&\int_{\partial D_3^0}\left\{-\frac{2e^{-2i\lambda L}}{\Delta}\left[\lambda c_{12}-\Phi_{12}^{(1)}-\frac{\Phi_{12}^{(1)}\bar \phi_{22}^{(1)}+\Phi_{13}^{(1)}\bar \phi_{23}^{(1)}}{\lambda}-\bar \phi_{21}^{(1)}e^{2i\lambda L}\right]\right\}_-d\lambda\\
    &+\int_{\partial D_3^0}\left\{-\frac{2e^{-2i\lambda L}}{\Delta}\left[\frac{\Phi_{12}^{(1)}\bar \phi_{22}^{(1)}+\Phi_{13}^{(1)}\bar \phi_{23}^{(1)}}{\lambda}-(\lambda \bar \phi_{21}-\bar \phi_{21}^{(1)})e^{2i\lambda L}\right]\right\}_-d\lambda\\
    &+\int_{\partial D_3^0}\left\{\frac{2e^{-2 i\lambda L}}{\Delta}\left[ \bar \phi_{21}(\Phi_{11}-1)e^{2i\lambda L}+(\bar \phi_{22}-1) \Phi_{12}+\bar\phi_{23} \Phi_{13}\right]\right\}_-d\lambda.
    \end{split}
    \ee
    Using the asymptotic (\ref{c21largek}) and Cauchy theorem to compute the first term on the right-hand side of equation (\ref{DtoNIt}), we find
    \be\label{DtoNItres}
    \begin{split}
    I(t)=&-i\pi\Phi_{12}^{(2)}(t)+\int_{\partial D_3^0}(\frac{ g_{01}}{2i}\bar \phi_{22-}+\frac{ g_{02}}{2i}\bar \phi_{23-})d\lambda+\int_{\partial D_3^0}\frac{2}{\Delta}(\lambda \bar \phi_{21-}-2\bar \phi_{21}^{(1)})]d\lambda\\
    &+\int_{\partial D_3^0}\frac{2\lambda}{\Delta}\left[\bar\phi_{21}( \Phi_{11}-1)e^{2i\lambda L}+(\bar\phi_{22}-1) \Phi_{12}e^{-2i\lambda L}+\bar\phi_{23} \Phi_{13}e^{-2i\lambda L}\right]_-d\lambda.
    \end{split}
    \ee

    Equations (\ref{Phi122}) and (\ref{DtoNItres}) imply
    \begin{equation}\label{Phi212value}
    \begin{split}
    \Phi_{12}^{(2)}(t)=&\frac{1}{2i\pi}\int_{\partial D_3^0}\frac{\Sigma}{\Delta}(\lambda\Phi_{12-}+ig_{01})d\lambda -\frac{1}{4\pi}\int_{\partial D_3^0}(g_{01}\bar \phi_{22-}+g_{02}\bar \phi_{23-})d\lambda+\frac{1}{i\pi}\int_{\partial D_3^0}\frac{1}{\Delta}(\lambda \bar \phi_{21-}-2\bar \phi_{21}^{(1)})]d\lambda\\
    &+\frac{1}{i\pi}\int_{\partial D_3^0}\frac{2\lambda}{\Delta}\left[\bar\phi_{21}( \Phi_{11}-1)+(\bar\phi_{22}-1) \Phi_{12}e^{-2i\lambda L}+\bar\phi_{23} \Phi_{13}e^{-2i\lambda L}\right]_-d\lambda.
    \end{split}
    \end{equation}
   Equations (\ref{Phi221}) and(\ref{Phi212value}) together with (\ref{g112}) yield (\ref{DtoNg11}). Similarly, we can prove (\ref{DtoNg12}).

  \par
    The expressions (\ref{DtoNf11}) for $ f_{11}(t)$ can be derived in a similar way. Indeed, we note that equation (\ref{f112}) expresses $ f_{11}$ in terms of $\phi^{(2)}_{12}$ and $\phi_{22}^{(1)}, \phi_{32}^{(1)}$. These three equations satisfy the analog of equations (\ref{Phi221}) and (\ref{Phi122}). In particular, $\phi_{21}^{(2)}$ satisfies
    \begin{equation}\label{phi212}
    i\pi\phi_{12}^{(2)}=-\int_{\partial D_3^0}\left( \frac{\Sigma}{\Delta}(\lambda \phi_{12-}-2\phi_{12}^{(1)})\right)d\lambda +J(t),
    \end{equation}
    where
     \[
     J(t)=\int_{\partial D_3^0}\left\{\frac{2e^{2i\lambda L}}{\Delta}[\lambda \phi_{12}(t,\lambda)-\frac{ f_{01}(t)}{2i}]\right\}_-d\lambda.
    \]

   Then using the global relation  to compute $J(t)$, that is
    \be\label{DtoNJtres}
    \begin{split}
    J(t)=&-i\pi\phi_{12}^{(2)}(t)+\int_{\partial D_3^0}(\frac{ f_{01}}{2i}\bar \Phi_{22-}+\frac{ f_{02}}{2i}\bar \Phi_{23-})d\lambda-\int_{\partial D_3^0}\frac{2}{\Delta}(\lambda \bar \Phi_{21-}-2\bar \Phi_{21}^{(1)})]d\lambda\\
    &-\int_{\partial D_3^0}\frac{2\lambda}{\Delta}\left[\bar\Phi_{21}( \phi_{11}-1)+(\bar\Phi_{22}-1) \phi_{12}e^{2i\lambda L}+\bar\Phi_{23} \phi_{13}e^{2i\lambda L}\right]_-d\lambda.
    \end{split}
    \ee

    The equation (\ref{phi212}) and (\ref{DtoNJtres}) together with the asymptotics of $c_{12}(t, \lambda)$ yield (\ref{DtoNf11}). The proof of (\ref{DtoNf12}) is similar.

    \item In order to derive the representations (\ref{NtoDg01}) relevant for the Neumann problem, we note that equation (\ref{g012}) expresses $g_{01}$ and $ g_{02}$ in terms of $\Phi^{(1)}_{12}$ and $\Phi_{13}^{(1)}$, respectively. Furthermore, equation (\ref{mu2x0tk}) and Cauchy's theorem imply

        \[
        -\frac{i\pi}{2}\Phi^{(1)}_{12}(t)=\int_{\partial D_2}\Phi_{12}(t,\lambda)d\lambda=\int_{\partial D_4}\Phi_{12}(t,\lambda)d\lambda,
        \]
        Thus,
        \be\label{Phi121}
        \begin{split}
        i\pi \Phi^{(1)}_{12}(t)=&\left(\int_{\partial D_3}+\int_{\partial D_1}\right)\Phi_{12}(t,\lambda)d\lambda\\
        =&\int_{\partial D_3}\Phi_{12-}(t,\lambda)d\lambda\\
        =&\int_{\partial D_3}\left(\frac{\Sigma}{\Delta}\Phi_{12+}(t,\lambda)\right)d\lambda+K(t),
        \end{split}
        \ee
        where
        \[
        K(t)=-\int_{\partial D_3^0}\frac{2}{\Delta}\left(e^{-2i\lambda L}\Phi_{12}(t,\lambda)\right)_+ d\lambda,
        \]
        using the global relation and the asymptotic formulas of $c_{21}(t, \lambda)$, we have
        \be\label{NtoDKtres}
        K(t)=-i\pi \Phi^{(1)}_{12}(t)+2\int_{\partial D_3^0}\left\{\frac{1}{\Delta}\bar \phi_{21+}+\frac{1}{\Delta}\left[\bar\phi_{21}( \Phi_{11}-1)e^{2i\lambda L}+(\bar\phi_{22}-1) \Phi_{12}+\bar\phi_{23} \Phi_{13}\right]_+\right\}d\lambda.
        \ee
        Equations (\ref{g012}), (\ref{Phi121}) and (\ref{NtoDKtres}) yields (\ref{NtoDg01}). The proof of the other formulas is similar.
\end{enumerate}
\end{proof}

\subsection{Effective characterizations}
Substituting into the system (\ref{Phil3sys}), (\ref{Phil1sys}) and (\ref{Phil2sys}) the expressions
\begin{subequations}
\be\label{Phij3eps}
\Phi_{ij}=\Phi_{ij,0}+\eps\Phi_{ij,1}+\eps^2\Phi_{ij,2}+\cdots,\quad i,j=1,2,3.
\ee
\be\label{phij3eps}
\phi_{ij}=\phi_{ij,0}+\eps\phi_{ij,1}+\eps^2\phi_{ij,2}+\cdots,\quad i,j=1,2,3.
\ee
\be\label{g0eps}
g_{01}=\eps g^{(1)}_{01}+\eps^2 g^{(2)}_{01}+\cdots,\\
g_{02}=\eps g^{(1)}_{02}+\eps^2 g^{(2)}_{02}+\cdots,
\ee
\be\label{f0eps}
f_{01}=\eps f^{(1)}_{01}+\eps^2 f^{(2)}_{01}+\cdots,\\
f_{02}=\eps f^{(1)}_{02}+\eps^2 f^{(2)}_{02}+\cdots,
\ee
\be\label{g1eps}
g_{11}=\eps g^{(1)}_{11}+\eps^2 g^{(2)}_{11}+\cdots,\\
g_{12}=\eps g^{(1)}_{12}+\eps^2 g^{(2)}_{12}+\cdots,
\ee
\be\label{f1eps}
f_{11}=\eps f^{(1)}_{11}+\eps^2 f^{(2)}_{11}+\cdots,\\
f_{12}=\eps f^{(1)}_{12}+\eps^2 f^{(2)}_{12}+\cdots,
\ee
\end{subequations}
where $\eps>0$ is a small parameter, we find that the terms of $O(1)$ give
\be\label{Oeps0}
O(1):\left\{
\ba{ccc}
\Phi_{13,0}=0 & \Phi_{23,0}=0 & \Phi_{33,0}=1,\\
\Phi_{11,0}=1 & \Phi_{21,0}=0 & \Phi_{31,0}=0,\\
\Phi_{12,0}=0 & \Phi_{22,0}=1 & \Phi_{32,0}=0.
\ea
\right.
\ee
Moreover, the terms of $O(\eps)$ give
\be\label{Oeps}
O(\eps):\left\{
\ba{l}
\Phi_{33,1}=0 \quad \Phi_{23,1}=0,\\[6pt]

\Phi_{13,1}(t,k)=\int_0^te^{-4i\lambda^2(t-t')}(2\lambda g^{(1)}_{02}+i g^{(1)}_{12})(t')dt',\\

\Phi_{11,1}=0,\\[6pt]

\Phi_{21,1}=\int_{0}^{t}e^{4i\lambda^2(t-t')}(i\lambda\bar g^{(1)}_{11})(t')dt',\\[6pt]

\Phi_{31,1}=\int_{0}^{t}e^{4i\lambda^2(t-t')}( i\lambda\bar g^{(1)}_{12})(t')dt',\\[6pt]

\Phi_{12,1}=\int_{0}^{t}e^{-4i\lambda^2(t-t')}(2\lambda g^{(1)}_{01}+i g^{(1)}_{11})(t')dt',\\[6pt]

\Phi_{22,1}=0,\quad \Phi_{32,1}=0.
\ea
\right.
\ee
the terms of $O(\eps^2)$ give
\be\label{Oeps2}
O(\eps^2):\left\{
\ba{l}
\Phi_{13,2}=\int_{0}^{t}e^{-4i\lambda^2(t-t')}(2\lambda g^{(2)}_{02}+i\lambda g^{(2)}_{12})(t')dt',\\[6pt]

\Phi_{23,2}=\int_{0}^{t}\left[ i\lambda \bar g^{(1)}_{11}(t')\Phi_{13,1}(t',k)-\frac{1}{2}\bar g^{(1)}_{11}g^{(1)}_{02})(t')\right]dt',\\[6pt]

\Phi_{33,2}=\int_{0}^{t}\left[i\lambda\bar g^{(1)}_{12})(t')\Phi_{13,1}(t',k)-\frac{1}{2}\bar g^{(1)}_{12}g^{(1)}_{02}(t')\right]dt',\\[6pt]

\Phi_{11,2}=\int_{0}^{t}\left[\frac{1}{2}(\bar g^{(1)}_{11} g_{01}^{(1)}+\bar g^{(1)}_{12}g_{02}^{(1)})(t')+(2\lambda g_{01}^{(1)}+i g^{(1)}_{11})\Phi_{21,1}(t',\lambda)+(2\lambda g_{02}^{(1)}+i g^{(1)}_{12})\Phi_{31,1}(t',\lambda)\right]dt',\\[6pt]

\Phi_{21,2}=\int_{0}^{t}e^{4i\lambda^2(t-t')}(i\lambda \bar g^{(2)}_{11})(t')dt',\\[6pt]

\Phi_{31,2}=\int_{0}^{t}e^{4i\lambda^2 (t-t')}(i\lambda \bar g^{(2)}_{12})(t')dt',\\[6pt]

\Phi_{12,2}=\int_{0}^{t}e^{-4i\lambda^2(t-t')}(2\lambda g_{01}^{(2)}+i g^{(2)}_{11})(t')dt',\\[6pt]

\Phi_{22,2}=\int_{0}^{t}\left[(i\lambda \bar g^{(1)}_{11}(t')\Phi_{12,1}(t',k)-\frac{1}{2}\bar g^{(1)}_{11} g^{(1)}_{01})(t')\right]dt',\\[6pt]

\Phi_{32,2}=\int_{0}^{t}\left[(i\lambda \bar g^{(1)}_{12}(t')\Phi_{12,1}(t',k)-\frac{1}{2}\bar g^{(1)}_{12} g^{(1)}_{01})(t')\right]dt'.
\ea
\right.
\ee
\par
Similarly, we will have the analogue formulas for $\{\phi_{ij,l}\}_{i,j=1}^3, l=0,1,2$ expressed in terms of the boundary data at $x=L$, that is $\{f_{ij}^{(l)}\}_{i=0,1}^{j=1,2}, l=1,2$.

\par
On the other hand, expanding (\ref{DtoNg}), (\ref{DtoNf}) and assuming for simplicity that $m_{11}(\mathcal{A})(\lambda)$ has no zeros, we find
\begin{subequations}
\be\label{DtoNg11^1}
 g^{(1)}_{11}(t)=\frac{2}{i\pi}\int_{\partial D_3^0}(\lambda\Phi_{12,1-}(t,\lambda)+i g_{01}^{(1)})d\lambda+\frac{4}{i\pi}\int_{\partial D_3^0}\frac{1}{\Delta}(\lambda \bar \phi_{21,1-}(t,\lambda)-2\bar \phi_{21}^{(1)})d\lambda,
\ee
\be\label{DtoNg12^1}
 g^{(1)}_{12}(t)=\frac{2}{i\pi}\int_{\partial D_3^0}(\lambda\Phi_{13,1-}(t,\lambda)+i g_{02}^{(1)})d\lambda+\frac{4}{i\pi}\int_{\partial D_3^0}\frac{1}{\Delta}(\lambda \bar \phi_{31,1-}(t,\lambda)-2\bar \phi_{31}^{(1)})d\lambda,
\ee
\be\label{DtoNf11^1}
 f^{(1)}_{11}(t)=-\frac{2}{i\pi}\int_{\partial D_3^0}(\lambda\phi_{12,1-}(t,\lambda)+i f_{01}^{(1)})d\lambda-\frac{4}{i\pi}\int_{\partial D_3^0}\frac{1}{\Delta}(\lambda \bar \Phi_{21,1-}(t,\lambda)-2\bar \Phi_{21}^{(1)})d\lambda,
\ee
\be\label{DtoNf12^1}
 f^{(1)}_{12}(t)=-\frac{2}{i\pi}\int_{\partial D_3^0}(\lambda\phi_{13,1-}(t,\lambda)+i f_{02}^{(1)})d\lambda-\frac{4}{i\pi}\int_{\partial D_3^0}\frac{1}{\Delta}(\lambda \bar \Phi_{31,1-}(t,\lambda)-2\bar \Phi_{31}^{(1)})d\lambda,
\ee
\end{subequations}

\par
we also find that
\begin{equation}\label{Om^1}
\begin{array}{c}
\Phi_{12,1-}=4\lambda\int_0^t e^{-4i\lambda^2(t-t')} g_{01}^{(1)}(t')dt',\\
\Phi_{13,1-}=4\lambda\int_0^t e^{-4i\lambda^2(t-t')} g_{02}^{(1)}(t')dt',\\
\phi_{21,1-}=2i\lambda\int_0^t e^{4i\lambda^2(t-t')}\bar f_{11}^{(1)}(t')dt',\\
\phi_{31,1-}=2i\lambda\int_0^t e^{4i\lambda^2(t-t')}\bar f_{12}^{(1)}(t')dt'.
\end{array}
\end{equation}

\par
The Dirichlet problem can now be solved perturbatively as follows: assuming for simplicity that $m_{11}(\mathcal{A})(\lambda)$ has no zeros and given $ g_{01}^{(1)},  g_{02}^{(1)}$ and $\bar f_{11}^{(1)}, \bar f_{12}^{(1)}$, we can use equation (\ref{Om^1}) to determine $\Phi_{1j,1-}, \phi_{j1,1-}, j=2,3$. We can then compute $ g_{11}^{(1)},  g_{12}^{(1)}$ from (\ref{DtoNg11^1}), (\ref{DtoNg12^1}) and then $\Phi_{1j,1}, j=2,3$ from (\ref{Oeps}) and the analogue results for $\phi_{j1,1}, j=2,3$. In the same way we can determine $\Phi_{1j,2}, j=2,3$ from (\ref{Oeps2}) and the analogue results for $\phi_{j1,2}, j=2,3$, then compute $ g_{11}^{(2)},  g_{12}^{(2)}$ and  $ f_{11}^{(2)},  f_{12}^{(2)}$. These arguments can be extended to the higher order and also can be extended to the systems (\ref{Phil3sys}), (\ref{Phil1sys}) and (\ref{Phil2sys}) thus yields a constructive scheme for computing $S(k)$ to all orders. The construction of $S_L(\lambda)$ is similar.
\par
Similarly, these arguments also can be used to the Neumann problem. That is to say, in all cases, the system can be solved perturbatively to all orders.

\subsection{The large $L$ limit}

In the limit $L\rightarrow \infty$, the representations for $g_{11}(t),g_{12}(t)$ and $g_{01}(t),g_{02}(t)$ of theorem 4.3 \label{maintheom} reduce to the corresponding representations on the half-line. Indeed, as $L\rightarrow \infty$,
\[
\ba{l}
f_{01}\rightarrow 0,\quad f_{02}\rightarrow 0,\quad f_{11}\rightarrow 0,\quad f_{12}\rightarrow 0,\\
\phi_{ij}\rightarrow \dta_{ij},\quad \frac{\Sig}{\Dta}\rightarrow 1\mbox{ as $\lambda \rightarrow \infty$ in $D_3$}
\ea
\]
Thus, the $L\rightarrow \infty$ limits of the representations (\ref{DtoNg11}), (\ref{DtoNg12}) and (\ref{NtoDg01}), (\ref{NtoDg02}) are
\be
\ba{rl}
 g_{11}(t)&=\frac{2}{i\pi}\int_{\partial D^0_3}(\lambda \Phi_{12-}+i g_{01})d\lambda+\frac{2}{\pi}\int_{\partial D^0_3} (g_{01}\Phi_{22-}+g_{02}\Phi_{32-})d\lambda,\\
g_{12}(t)&=\frac{2}{i\pi}\int_{\partial D^0_3}(\lambda \Phi_{13-}+i g_{02})d\lambda+\frac{2}{\pi}\int_{\partial D^0_3} (g_{01}\Phi_{23-}+g_{02}\Phi_{33-})d\lambda.
\ea
\ee
and
\be
\ba{ll}
g_{01}(t)=\frac{1}{\pi}\int_{\partial D^0_3}\Phi_{12+}d\lambda,&
g_{02}(t)=\frac{1}{\pi}\int_{\partial D^0_3}\Phi_{13+}d\lambda,
\ea
\ee
respectively. And these formulas coincide with the corresponding half-line formulas, see  (\ref{DtoNg1}), (\ref{NtoDg0}).
\bigskip

\appendix
\section{Some formulas on the half-line}

For the convenience of reader,  we show the half-line formulas of $g_{11}(t), g_{12}(t)$ and $g_{01}(t), g_{02}(t)$ on the $\lambda-$plane.
\par
From the global relation (\ref{globalrel})and replacing $T$ by $t$, we find
\be
\mu_2(0,t,\lambda)e^{2i\lambda^2t\hat \Lam}s(\lambda)=c(t,\lambda),\quad \lambda \in(D_3\cup D_4,D_1\cup D_2,D_1\cup D_2).
\ee
We partition matrix as following,
\be\label{block}
\mu_2(0,t,\lambda)=\left(\ba{cc}\Phi_{11}&\Phi_{1j}\\\Phi_{j1}&\Phi_{2\times 2}\ea
\right),\quad j=2,3,
\ee
where $\Phi_{2\times 2}$ denotes a $2\times 2$ matrix, $\Phi_{1j}$ denotes a $1\times 2$ vector, $\Phi_{j1}$ denotes a $2\times 1$ vector.
Then, we can write the second column of the global relation, undering the matrix partitioned as (\ref{block}), as
\begin{subequations}\label{globalrelsec}
\be\label{globalrel1j}
\Phi_{11}(t,\lambda)s_{1j}(\lambda)s^{-1}_{2\times 2}(\lambda)e^{-4i\lambda^2t}+\Phi_{1j}(t,\lambda)=c_{1j}(t,\lambda),\quad \lambda\in D_1\cup D_2,
\ee
\be\label{globalrel2by2}
\Phi_{j1}(t,\lambda)s_{1j}(\lambda)s^{-1}_{2\times 2}(\lambda)e^{-4i\lambda^2t}+\Phi_{2\times 2}(t,\lambda)=c_{2\times 2}(t,\lambda),\quad \lambda\in D_1\cup D_2,
\ee
\end{subequations}
The functions $c_{1j}(t,\lambda),c_{2\times 2}(t,\lambda)$ are analytic and bounded in $D_1\cup D_2$ away from the possible zeros of $m_{11}(\lambda)$ and of order $O(\frac{1}{\lambda})$ as $k\rightarrow \infty$.
\par
From the asymptotic of $\mu_j(x,t,\lambda)$ in (\ref{mujasykinf}) we have
\be\label{mu2x0tk}
\ba{rl}
\mu_{2}(0,t,\lambda)=&\id+\frac{1}{\lambda}\left(\ba{cc}\int_{(0,0)}^{(0,t)}\Dta_{11}dx'+\eta_{11}dt'&\frac{1}{2i}Q\\\frac{1}{8i}|q|^2 Q^{T}-\frac{1}{4}\bar Q^T_x&\int_{(0,0)}^{(0,t)}\Dta dx'+\eta dt'\ea\right)\\
&{}+\frac{1}{\lambda^2}\left(\ba{cc}\mu_{11}^{(2)}&\frac{1}{4}Q_x+\frac{1}{2i}Q\mu_{2\times2}^{(1)}\\\mu_{j1}^{(2)}&\mu_{2\times2}^{(2)}\ea\right)+O(\frac{1}{\lambda^3})
\ea
\ee
where $Q=(q_1,q_2)$, $\Dta_{11}$ is defined by first identities of (\ref{Dtadef}), $\eta_{11}$ is defined by (\ref{etadef}), $\Dta$ and $\eta$ are $2\times 2$ matrices defined as following,
\be
\Dta=\left(\ba{cc}\Dta_{22}&\Dta_{23}\\\Dta_{32}&\Dta_{33}\ea\right),\qquad  \eta=\left(\ba{cc}\eta_{22}&\eta_{23}\\\eta_{32}&\eta_{33}\ea\right),
\ee
Also, we have
\begin{subequations}\label{Phi3}
\be\label{Phi13}
\Phi_{1j}(t,\lambda)=\frac{\Phi_{1j}^{(1)}(t)}{\lambda}+\frac{\Phi_{1j}^{(2)}(t)}{\lambda^2}+O(\frac{1}{\lambda^3}),\quad \lambda\rightarrow \infty,\lambda\in D_1\cup D_2
\ee
\be\label{Phi2by2}
\Phi_{2\times 2}(t,\lambda)=\id_{2\times 2}+\frac{\Phi_{2\times 2}^{(1)}(t)}{\lambda}+\frac{\Phi_{2\times 2}^{(2)}(t)}{\lambda^2}+O(\frac{1}{\lambda^3}),\quad \lambda\rightarrow \infty,\lambda\in D_1\cup D_2.
\ee
where
\[
\ba{ll}
\Phi_{1j}^{(1)}(t)=\frac{1}{2i}g_0(t),&\Phi_{1j}^{(2)}(t)=\frac{1}{4}g_1(t)-\frac{i}{2}g_0\Phi_{2\times 2}^{(1)}(t)\\
\Phi_{2\times 2}^{(1)}(t)=\int_0^t\eta dt'.
\ea
\]
here $g_0(t)$ and $g_1(t)$ are vector boundary functions defined by the boundary data of (\ref{ibv-cgi}) as $g_0(t)=(g_{01}(t),g_{02}(t))$ and $g_1(t)=(g_{11}(t),g_{12}(t))$.
\end{subequations}
\par
In particular, we find the following expressions for the boudary values:
\begin{subequations}\label{g01}
\be\label{g0}
g_0=2i\Phi_{1j}^{(1)}(t),
\ee
\be\label{g1}
g_1=2ig_0\Phi_{2\times 2}^{(1)}(t)+4\Phi_{1j}^{(2)}(t),
\ee
\end{subequations}
We will also need the asymptotic of $c_{1j}(t,\lambda)$,
\begin{lemma}
The global relation (\ref{globalrelsec}) implies that the large $\lambda$ behavior of $c_{1j}(t,\lambda),c_{2\times 2}(t,\lambda)$ satisfies
\be\label{cjlargek}
c_{1j}(t,\lambda)=\frac{\Phi_{1j}^{(1)}(t)}{\lambda}+\frac{\Phi_{1j}^{(2)}(t)}{\lambda^2}+O(\frac{1}{\lambda^3}),\quad \lambda\rightarrow \infty,\lambda\in D_1.
\ee

\end{lemma}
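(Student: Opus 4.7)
The plan is to read off the asymptotic expansion of $c_{1j}(t,\lambda)$ directly from the global relation (\ref{globalrel1j}), which I rewrite as
\[
c_{1j}(t,\lambda)=\Phi_{1j}(t,\lambda)+\Phi_{11}(t,\lambda)\,s_{1j}(\lambda)\,s_{2\times2}^{-1}(\lambda)\,e^{-4i\lambda^{2}t},
\]
and to show that the second summand is exponentially small as $\lambda\to\infty$ inside $D_1$, so that only the expansion (\ref{Phi13}) of $\Phi_{1j}$ contributes to all algebraic orders.

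First I would examine the signature of $e^{-4i\lambda^{2}t}$ on $D_1$. By the very definition of $D_1$, one has $\re z_1>\re z_2=\re z_3$ with $z_1=2i\lambda^{2}$, $z_2=z_3=-2i\lambda^{2}$, hence $\re(2i\lambda^{2})>0$ and therefore $\re(-4i\lambda^{2}t)<0$ for every $t\in(0,T]$. Consequently $\lvert e^{-4i\lambda^{2}t}\rvert$ decays super-polynomially as $|\lambda|\to\infty$ along any ray of $D_1$, so $e^{-4i\lambda^{2}t}=O(\lambda^{-n})$ for every $n\geq 1$ in that regime.

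Next I would combine this decay with the boundedness of the prefactor $\Phi_{11}(t,\lambda)\,s_{1j}(\lambda)\,s_{2\times 2}^{-1}(\lambda)$ as $\lambda\to\infty$ in $D_1$. Boundedness of $\Phi_{11}(t,\lambda)$ on $\bar D_1$ comes from Proposition~2.1 applied to $\mu_2(0,t,\lambda)$; boundedness of $s_{1j}(\lambda)$ on $D_1$ and of the cofactor entries of $s(\lambda)$ constituting $s_{2\times 2}^{-1}(\lambda)$ are read off the domain tables for $s(\lambda)$ and $s^{A}(\lambda)$ from subsection~2.7 (the half-line analogue). The determinant of $s_{2\times 2}(\lambda)$ tends to $1$ as $\lambda\to\infty$ because $s(\lambda)=\mu_3(0,0,\lambda)\to\id$, so $s_{2\times 2}^{-1}(\lambda)$ is uniformly bounded for $|\lambda|$ large in $D_1$. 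Multiplying by the super-polynomially small exponential gives $\Phi_{11}\,s_{1j}\,s_{2\times 2}^{-1}\,e^{-4i\lambda^{2}t}=O(\lambda^{-n})$ for every $n$, so that it is absorbed into any algebraic remainder. Substituting the expansion (\ref{Phi13}) of $\Phi_{1j}$ then produces (\ref{cjlargek}) at once, and the same reasoning applied to (\ref{globalrel2by2}) together with (\ref{Phi2by2}) yields the companion asymptotics of $c_{2\times 2}(t,\lambda)$.

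The main (mild) obstacle I anticipate is the control of $s_{2\times 2}^{-1}(\lambda)$: one must ensure that the possible zeros of $\det s_{2\times 2}(\lambda)$ do not accumulate at infinity within $D_1$. This is settled by the standing assumption in subsection 2.8 that the relevant Fredholm determinants have only finitely many simple zeros in each $D_n$ and none on the boundary; taking $|\lambda|$ past the outermost such zero gives a uniform bound, and the decaying exponential finishes the argument.
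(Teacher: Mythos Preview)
Your route is genuinely different from the paper's. The paper says the proof is ``analogous to the proof provided in Lemma~4.2'', and that proof does \emph{not} argue by exponential decay of $e^{-4i\lambda^{2}t}$. Instead it substitutes an ansatz of the form
\[
\bigl(\alpha_{0}(t)+\tfrac{\alpha_{1}(t)}{\lambda}+\cdots\bigr)+\bigl(\beta_{0}(t)+\tfrac{\beta_{1}(t)}{\lambda}+\cdots\bigr)e^{\pm 4i\lambda^{2}t}
\]
column by column into the $t$--part of the Lax pair, determines the coefficients recursively from the initial conditions (using compatibility and, crucially, the vanishing--initial--data hypothesis to kill the $\beta$--coefficients at the relevant orders), and then feeds these expansions into the global relation. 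In the interval setting this machinery is unavoidable, because the factors $e^{\pm 2i\lambda L}$ in (\ref{c21largek})--(\ref{c31largek}) do not decay and one genuinely obtains a two--piece asymptotic; the half--line lemma is simply the degenerate case of that computation.

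Your shortcut---bound the prefactor $\Phi_{11}s_{1j}s_{2\times 2}^{-1}$ and use the super\-polynomial decay of $e^{-4i\lambda^{2}t}$---is more elementary and, for fixed $t>0$ and $\lambda$ strictly inside $D_{1}$, it is correct. What it does \emph{not} deliver is the asymptotic uniformly on $\bar D_{1}$ or at $t=0$: on $\partial D_{1}$ one has $\re(-4i\lambda^{2}t)=0$, so the second summand is only $O(1/\lambda)$ there, not $O(\lambda^{-n})$; and at $t=0$ the exponential equals~$1$. Since the very next step in the appendix applies Cauchy's theorem to integrals along $\partial D_{1}$, you would still owe a separate argument that the exponential term contributes nothing to those boundary integrals. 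The paper's ODE--based method sidesteps this by producing an expansion that is valid up to the boundary and for all $t\in[0,T]$, and it also makes transparent exactly where the vanishing--initial--value assumption enters (it forces $\beta_{j}(0)=0$, hence $\beta_{j}\equiv 0$).
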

\begin{proof}
Analogous to the proof provided in Lemma 4.2.
\end{proof}
We can now derive the maps between Dirichlet boundary condition and  the Neumann boundary condition as follows:
\begin{enumerate}
\item For the Dirichlet problem, the unknown Neumann boundary value $g_1(t)$ is given by

\be\label{DtoNg1}
\ba{rcl}
g_1(t)&=&\frac{2}{\pi i}\int_{\partial D_3}(\lambda\Phi_{1j-}(t,\lambda)+ig_0(t))+\frac{2g_0}{\pi}\int_{\partial D_3}\Phi_{2\times2-}d\lambda\\
&&{}-\frac{4}{i\pi}\int_{\partial D_3}\lambda e^{-4i\lambda^2t}\Phi_{11}(-\lambda)s_{1j}(-\lambda)s^{-1}_{2\times 2}(-\lambda)d\lambda.
\ea
\ee

\item For the Neumann problem, the unknown boundary values $g_0(t)$ is given by

\be\label{NtoDg0}
\ba{rl}
g_0(t)=&\frac{1}{\pi}\int_{\partial D_3}\Phi_{1j+}(t,\lambda)d\lambda+\frac{2}{\pi}\int_{\partial D_3}e^{-4i\lambda^2t}\Phi_{11}(-\lambda)s_{1j}(-\lambda)s^{-1}_{2\times 2}(-\lambda)d\lambda.
\ea
\ee

\end{enumerate}
\begin{proof}
\begin{enumerate}
\item In order to derive (\ref{DtoNg1}) we note that equation (\ref{g1}) expresses $g_1$ in terms of $\Phi_{2\times 2}^{(1)}$ and $\Phi_{1j}^{(2)}$. Furthermore, equation (\ref{Phi3}) and Cauchy theorem imply
    \[
    -\frac{\pi i}{2}\Phi_{2\times 2}^{(1)}(t)=\int_{\partial D_2}[\Phi_{2\times 2}(t,\lambda)-\id_{2\times 2}]d\lambda=\int_{\partial D_4}[\Phi_{2\times 2}(t,\lambda)-\id_{2\times 2}]d\lambda
    \]
    and
    \[
    -\frac{\pi i}{2}\Phi_{1j}^{(2)}(t)=\int_{\partial D_2}\left[\lambda\Phi_{1j}(t,\lambda)-\frac{g_0(t)}{2i}\right]d\lambda=\int_{\partial D_4}\left[\lambda\Phi_{1j}(t,\lambda)-\frac{g_0(t)}{2i}\right]d\lambda.
    \]
    Thus,
    \be\label{Phi331}
    \ba{l}
    i\pi\Phi_{2\times 2}^{(1)}(t)=-\left(\int_{\partial D_2}+\int_{\partial D_4}\right)[\Phi_{2\times 2}(t,\lambda)-\id_{2\times 2}]d\lambda\\
    {}=\left(\int_{\partial D_1}+\int_{\partial D_3}\right)[\Phi_{2\times 2}(t,\lambda)-\id_{2\times 2}]d\lambda\\
    {}=\int_{\partial D_3}[\Phi_{2\times 2}(t,\lambda)-\id_{2\times 2}]d\lambda-\int_{\partial D_3}[\Phi_{2\times 2}(t,-\lambda)-\id_{2\times 2}]d\lambda\\
    {}=\int_{\partial D_3}\Phi_{2\times 2-}(t,\lambda)d\lambda.
    \ea
    \ee
    Similarly,
    \be\label{Phi132}
    \ba{l}
    i\pi \Phi_{1j}^{(2)}(t)=\left(\int_{\partial D_3}+\int_{\partial D_1}\right)\left[\lambda\Phi_{1j}(t,\lambda)-\frac{g_0(t)}{2i}\right]d\lambda\\
    =\left(\int_{\partial D_3}-\int_{\partial D_1}\right)\left[\lambda\Phi_{1j}(t,-\lambda)-\frac{g_0(t)}{2i}\right]d\lambda+I(t)\\
    =\int_{\partial D_3}\left[\lambda\Phi_{1j-}(t,\lambda)+ig_0(t)\right]d\lambda+I(t).
    \ea
    \ee
    where $I(t)$ is defined by
    \[
    I(t)=2\int_{\partial D_1}\left[\lambda\Phi_{1j}(t,\lambda)-\frac{g_0(t)}{2i}\right]d\lambda
    \]
    The last step involves using the global relation to compute $I(t)$
    \be\label{DtoNIt}
    \ba{l}
    I(t)=2\int_{\partial D_1}\left[\lambda(c_{1j}s^{-1}_{2\times 2}-\Phi_{11}s_{1j}s^{-1}_{2\times 2}e^{-4i\lambda^2t})-\frac{g_0(t)}{2i}\right]d\lambda\\
    \ea
    \ee
    Using the asymptotic (\ref{cjlargek}) and Cauchy theorem to compute the first term on the right-hand side of equation (\ref{DtoNIt}), we find
    \be\label{DtoNItres}
    \ba{l}
    I(t)=-i\pi \Phi^{(2)}_{1j}-2\int_{\partial D_3}\lambda\Phi_{11}(-\lambda)s_{1j}(-\lambda)s^{-1}_{2\times 2}(-\lambda)e^{-4i\lambda^2t}d\lambda.
    \ea
    \ee
    Equations (\ref{Phi132}) and (\ref{DtoNItres}) imply
    \[
    \ba{l}
    \Phi_{1j}^{(2)}(t)=\frac{1}{2\pi i}\int_{\partial D_3}\left[\lambda\Phi_{1j-}(t,\lambda)+ig_0(t)\right]d\lambda\\
    {}-\frac{1}{\pi i}\int_{\partial D_3}\lambda\Phi_{11}(-\lambda)s_{1j}(-\lambda)s^{-1}_{2\times 2}(-\lambda)e^{-4i\lambda^2t}d\lambda.
    \ea
    \]
    This equation together with (\ref{g1}) and (\ref{Phi331}) yields (\ref{DtoNg1}).
    \par

    \item In order to derive the representations (\ref{NtoDg0}) relevant for the Neumann problem, we note that equation (\ref{g0}) expresses $g_0$ in terms of $\Phi^{(1)}_{1j}$. Furthermore, equation (\ref{Phi13}) and Cauchy's theorem imply

        \be\label{Phi131}
        \ba{l}
        -\frac{\pi i}{2}\Phi^{(1)}_{1j}(t)=\int_{\partial D_2}\Phi_{1j}(t,\lambda)d\lambda=\int_{\partial D_4}\Phi_{1j}(t,\lambda)d\lambda,
        \ea
        \ee
        Thus,
        \be\label{Phi1j1}
        \ba{l}
        i\pi \Phi^{(1)}_{1j}(t)=\left(\int_{\partial D_3}+\int_{\partial D_1}\right)\Phi_{1j}(t,\lambda)d\lambda\\
        {}=\left(\int_{\partial D_3}-\int_{\partial D_1}\right)\Phi_{1j}(t,\lambda)d\lambda+2\int_{\partial D_1}\Phi_{1j}(t,\lambda)d\lambda\\
        {}=\int_{\partial D_3}\Phi_{1j+}(t,\lambda)d\lambda+2\int_{\partial D_1}\Phi_{1j}(t,\lambda)d\lambda,
        \ea
        \ee
        and using the global relation, we have
        \be\label{Phi1jtk}
        \ba{l}
        2\int_{\partial D_1}\Phi_{1j}(t,\lambda)d\lambda=2\int_{\partial D_1}(c_{1j}s^{-1}_{2\times 2}-\Phi_{11}s_{1j}s^{-1}_{2\times 2}e^{-4i\lambda^2t})d\lambda\\
        {}=-i\pi \Phi^{(1)}_{1j}(t)+2\int_{\partial D_3}\Phi_{11}(-\lambda)s_{1j}(-\lambda)s^{-1}_{2\times 2}(-\lambda)e^{-4i\lambda^2t}d\lambda.
        \ea
        \ee
        Equations (\ref{g0}), (\ref{Phi1j1}) and (\ref{Phi1jtk}) yields (\ref{NtoDg0}).
\end{enumerate}
\end{proof}

\bigskip

{\bf Acknowledgements.}
Fan was support by grants from the National Science Foundation of China under Project No. 11671095.   Xu was supported by National Science Foundation of China under project No.11501365,
Shanghai Sailing Program supported by Science and Technology Commission of Shanghai Municipality under Grant No.15YF1408100 and the Hujiang Foundation of China (B14005).

\end{document}